\title{\textsc{3-coloring} in Time $\mathcal{O}^*(1.3217^n)$}
\author[1]{Lucas Meijer}
\affil[ ]{\normalsize \href{mailto:l.meijer2@uu.nl}{l.meijer2@uu.nl}}
\affil[1]{Utrecht University, Department of Information and Computing Sciences}
\date{ \today}
\useunder{\uline}{\ul}{}
\newtheorem{theorem}{Theorem}[section]
\newtheorem{lemma}[theorem]{Lemma}
\theoremstyle{definition}
\newtheorem{definition}{Definition}[section]
\begin{document}
\maketitle
\begin{abstract}
    We propose a new algorithm for \textsc{3-coloring} that runs in time $\mathcal{O}^*(1.3217^n)$.
    For this algorithm, we make use of the time $\mathcal{O}^*(1.3289^n)$ algorithm for \textsc{3-coloring} by Beigel and Eppstein.
    They described a structure in all graphs, whose vertices could be colored relatively easily.
    In this paper, we improve upon this structure and present new ways to determine how the involved vertices reduce the runtime of the algorithm.
\end{abstract}

\section{Introduction}
\begin{figure}[H]
\centering
\includegraphics[page=1]{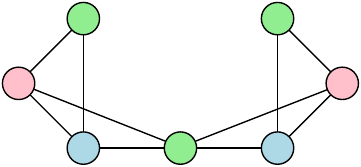}
\hspace{1cm}%
\includegraphics[page=2]{Images/3color.pdf}
\caption{Left: a 3-colorable graph. Right: a non-3-colorable graph.}
\label{fig:excolor}
\end{figure}
The \textsc{3-coloring} problem is one of the most fundamental problems in graph theory.
The \textsc{3-coloring} problem poses the following question:
given a graph $G$ with $n$ vertices, can we assign one of three colors (i.e. red, green, or blue) to every vertex, such that no two adjacent vertices are assigned the same color.

Notably, \textsc{3-coloring} is a special case of the  \textsc{graph coloring} problem.
In \textsc{graph coloring}, we aim to minimize the number of colors we need to color a graph, such that no two adjacent vertices receive the same color.
The \textsc{graph coloring} is a classic problem in complexity theory; it is one of Karp's original 21 \textsc{NP}-complete problems \cite{Karp1972}.
A year after Karp, Lovász showed that \textsc{3-coloring} is \textsc{NP}-complete \cite{Lovasz1973}.
Specifically, it is the lowest number of colors for which it is \textsc{NP}-complete whether a graph can be colored using this number of colors:
\textsc{0-coloring} and \textsc{1-coloring} are both trivially decided in polynomial time, while a \textsc{2-coloring} graphs can be found, if one exists, by greedily coloring vertices.

\subsection{History of \textsc{3-coloring}}
\begin{table}[]
\centering
\begin{tabular}{lll}
\textbf{Year} & \textbf{Author}     & \textbf{Time}                                    \\ \hline
1976          & Lawler              & $\mathcal{O}^*(1.4425^n)$ \\
1994          & Schiermeyer         & $\mathcal{O}^*(1.415^n)$   \\
2000          & Beigel and Eppstein & $\mathcal{O}^*(1.3289^n)$
\end{tabular}
\caption{\label{tab:overview}Overview of previous algorithms for \textsc{3-coloring}.}
\end{table}
Trivially, \textsc{3-coloring} can be solved in time $\mathcal{O}^*(3^n)$, by attempting every assignment of colors, and validating in polynomial time whether it leads to a correct solution.
The first non-trivial algorithm was created in 1976 by Lawler \cite{Lawler1976}.
He figured that \textsc{3-coloring} can be solved by iterating over all maximal independent sets with polynomial delay.
For each maximal independent set $I$ in a graph $G$, we assign one color to all vertices in the maximal independent set.
Then, we verify whether $G \backslash I$ is 2-colorable.
If there does not exist an independent set $I$, such that $G \backslash I$ is 2-colorable, $G$ is not 3-colorable.
Moon and Moser showed there exist at most $\mathcal{O}(3^{n/3}) = \mathcal{O}(1.44225^n)$ maximal independent sets in any graph \cite{Moon1965}, which we can iterate over in at most time $\mathcal{O}(3^{n/3}) = \mathcal{O}(1.44225^n)$ \cite{Johnson1988}.
Verifying whether $G \backslash I$ is 2-colorable takes polynomial time, so Lawler's algorithm runs in time $\mathcal{O}^*(3^{n/3})$.

In 1994, Schiermeyer found an improved algorithm to solve \textsc{3-coloring} in time $\mathcal{O}^*(1.415^n)$ \cite{Schiermeyer1994}.
Critically, Schiermeyer's algorithm was based on the idea that for any vertex, its set of neighbors must be 2-colorable if the graph is 3-colorable.

Before the time $\mathcal{O}^*(1.3217^n)$ algorithm presented in this paper, the best known algorithm for \textsc{3-coloring} was Beigel and Eppstein's time $\mathcal{O}^*(1.3289^n)$ algorithm \cite{Beigel2005}.
Beigel and Eppstein published this paper in 2000.
They created an algorithm that solves the \textsc{(3,2)-Constraint Satisfaction Problem} (\textsc{(3,2)-CSP}) in time $\mathcal{O}^*(1.36444^n)$.
The \textsc{3-coloring} problem can be reduced to \textsc{(3,2)-CSP} without increasing the instance size, implying that \textsc{3-coloring} can be solved in time $\mathcal{O}^*(1.36444^n)$ as well.
Furthermore, Beigel and Eppstein found that there must exist vertices that are easier to color than by using the \textsc{(3,2)-CSP} algorithm.
By first coloring a subset of the colors, they reach the time $\mathcal{O}^*(1.3289^n)$ algorithm.
Our algorithm will expand upon Beigel and Eppstein's algorithm and further improve it.

\subsection{State of Graph Coloring}
Given a graph $G$, we can determine whether $G$ can be colored using at most $k$ colors.
Notably, in \textsc{graph coloring}, we want to minimize the value of $k$, such that we can color $G$ using $k$ colors.
The \textsc{graph coloring} problem can be solved in time $\mathcal{O}^*(2^n)$ by using the technique of inclusion-exclusion \cite{Bjrklund2009}.
Surprisingly, for any $k \geq 7$, the time $\mathcal{O}^*(2^n)$ algorithm is the fastest known.

The fastest known algorithm for \textsc{4-coloring} runs in time $\mathcal{O}^*(1.7272^n)$ and was published by Fomin et al. in 2007 \cite{Fomin2007}.
Fomin et al. showed that a graph will either have a low number of maximal independent sets or a low pathwidth.
They created an algorithm that iterates over all maximal independent sets and one that is fixed-parameter tractable in the pathwidth of the graph;
depending on the graph, they choose to run either of the two.
The algorithm that iterates over all independent sets uses the best known algorithm for \textsc{3-coloring} as a subroutine.
As such, our improvement to \textsc{3-coloring} also improves \textsc{4-coloring}.
We can now solve \textsc{4-coloring} in time $\mathcal{O}^*(1.7247^n)$.

Finally, for \textsc{5-coloring} and \textsc{6-coloring}, recent developments by Zamir showed that both \textsc{5-coloring} and \textsc{6-coloring} can be solved in time $\mathcal{O}^*((2-\epsilon)^n)$ for a small value $\epsilon$ (both with a different $\epsilon$) \cite{https://doi.org/10.48550/arxiv.2007.10790}.
For any $k \geq 7$, it is still an open question whether there exists a time $\mathcal{O}^*((2-\epsilon)^n)$ algorithm to solve $k$-\textsc{coloring}.

\subsection{Our Contribution}
In this paper, we prove the following theorem:
\begin{theorem}
\label{the:1}
There is an algorithm for \textsc{3-coloring} running in time $\mathcal{O}^*(1.3217^n)$ on $n$-vertex graphs.
\end{theorem}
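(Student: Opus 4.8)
The plan is to follow the Beigel–Eppstein template but to extract more savings from the "easy-to-color" structure they identified. Recall their overall scheme: reduce \textsc{3-coloring} to \textsc{(3,2)-CSP} (no blow-up in variable count), and observe that one need not run the generic \textsc{(3,2)-CSP} solver on the whole instance — instead, first locate a structural subset $S$ of vertices whose colors are "cheap" to branch on, pre-color $S$ by exhaustive branching, and only then invoke the $\mathcal{O}^*(1.3645^n)$ CSP solver on the residual instance. If branching on $S$ costs $c^{|S|}$ and kills $\alpha|S|$ variables of the CSP instance on average (either by fixing them or by triggering reductions/propagation), the total time is $\mathcal{O}^*\bigl(c^{|S|}\cdot 1.3645^{\,n-\alpha|S|}\bigr)$, and one optimizes the worst case over $|S|$. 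Beigel and Eppstein obtained $1.3289^n$ this way; the target $1.3217^n$ comes from (i) identifying a richer/larger such structure and (ii) a sharper accounting of how its vertices propagate.

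The key steps, in order, would be: (1) Set up the \textsc{(3,2)-CSP} formulation and recall the reduction rules and the measure used in the $\mathcal{O}^*(1.3645^n)$ analysis, since the new bound must be stated in terms of that same measure (likely number of remaining variables, or a weighted variant). (2) Define the structure: in Beigel–Eppstein this is built around high-degree vertices and their neighborhoods (a vertex of degree $\ge 4$ together with the constraints among its neighbors), exploiting that a vertex's neighborhood must be $2$-colored — I would look for either a larger neighborhood-based gadget or a collection of such gadgets that jointly cover many vertices. (3) Prove the structural lemma: every graph either is trivially handled by low-degree reductions (max degree $\le 3$, which has its own fast algorithm / small pathwidth) or contains the desired structure on a guaranteed number of vertices. (4) Analyze the pre-coloring branching on the structure: bound the branching factor $c$ and, crucially, bound the number of CSP variables removed per branch — this is where "new ways to determine how the involved vertices reduce the runtime" enters; one shows each color choice on a structure vertex forces colors on or eliminates neighbors, cascading through $2$-colorability constraints. (5) Combine: write the running-time recurrence $T(n) \le c^{|S|}\, 1.3645^{\,n - f(|S|)}$, check that for every attainable value of $|S|$ (including the boundary cases where the structure is small, where one falls back to pure CSP, and where it is large) the exponent is at most $1.3217^n$, and separately verify the low-degree / bounded-pathwidth case meets the bound.

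The main obstacle I anticipate is step (4): getting a provably good lower bound on how many residual variables each structure-branch eliminates, uniformly over all graph configurations. The savings are delicate — they depend on the exact local topology of the structure (how the neighbors of a high-degree vertex are interconnected, whether there are triangles, shared neighbors, etc.), and the worst case is some awkward configuration where propagation is weakest. Handling this will likely require a careful case split over local structures together with the CSP reduction rules (e.g. handling degree-$2$ variables, "conflicting" variable pairs, and small separators in the constraint graph), and then a numerical optimization balancing the branching exponent against $1.3645$; the quoted constant $1.3217$ is presumably the solution of that optimization at the tight case. A secondary concern is making sure the structure is always \emph{findable} in polynomial time and that pre-coloring it does not interact badly with the invariants the CSP solver assumes, so step (3) must be stated so that the residual instance is again a legitimate \textsc{(3,2)-CSP} instance on the claimed number of variables.
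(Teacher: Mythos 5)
Your plan is a faithful restatement of the Beigel--Eppstein template (pre-color a cheap structural set $S$, then hand the residue to the $\mathcal{O}^*(1.36443^n)$ \textsc{(3,2)-CSP} solver, and balance the exponents), but that template by itself is exactly what already yields $1.3289^n$; nothing in your outline identifies the new ingredient that pushes the constant down to $1.3217$. The paper's improvement is not ``a larger neighborhood gadget around a high-degree vertex'': it keeps Beigel--Eppstein's two structures --- the maximal bushy forest (whose internal vertices are branched on at cost $3^{|R|}2^{|I|}$, with leaves removed for free) and the chromatic forest over the remaining vertices --- and attacks the specific bottleneck of that analysis, namely \emph{high-magnitude vertices}: neighbors of the forest with three neighbors outside it, which let many expensive-to-color vertices survive outside the forest. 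The concrete new content is (i) a polynomial-time transformation of any maximal bushy forest into a \emph{maximal low-magnitude} one, in which every high-magnitude vertex is adjacent only to a single-internal-vertex, four-leaf tree and any two high-magnitude vertices at the same tree share a common neighbor; (ii) a counting lemma showing that when five or more high-magnitude vertices crowd one tree, vertices whose three neighbors are all high-magnitude ($U'$) are forced to exist, and that the chromatic forest can be extended to cover those high-magnitude vertices as well (the ``maximal high-magnitude chromatic forest''), so they are colored at work factor $1.34004$ instead of $1.36443$; and (iii) a set of linear inequalities relating the sizes of the classes $R, I, L, N_1, N_2, N_{3,i}, U_j, U'$, which are fed into a linear program whose optimum exhibits the worst-case vertex frequencies and evaluates to $\mathcal{O}^*\bigl((3\cdot 1.36443\cdot 1.34004^{19.2})^{n/25.2}\bigr)=\mathcal{O}^*(1.3217^n)$.

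Your step (4) correctly anticipates where the difficulty lies, but your proposal offers no mechanism to resolve it: there is no analogue of the low-magnitude modification, no argument forcing the existence of $U'$-type vertices, no extension of the cheap-coloring structure to absorb them, and no global accounting (the paper's LP over per-class densities) that certifies the worst case. A ``careful case split plus numerical optimization'' on neighborhoods of single high-degree vertices, as you sketch it, would at best reconstruct the existing $1.3289^n$ analysis; the stated constant $1.3217$ is not the solution of the optimization you set up, but of the LP built from the new structural constraints. Also note two smaller mismatches with the paper's route: the low-degree case is not handled by a separate bounded-pathwidth algorithm but by explicit branching rules that eliminate degree-$\le 2$ vertices, cycles of degree-three vertices, and connected degree-three subgraphs of nine or more vertices (work factors $1.2433$ and $1.3022$); and the savings are not measured per branch of a single gadget but globally, by partitioning all vertices according to their relation to the two forests.
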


We improve upon Beigel and Eppstein's time $\mathcal{O}^*(1.3289^n)$ algorithm.
Their algorithm has been the best known result for over twenty years.
As such, we believe our improvement in the runtime is an important result for the \textsc{3-coloring} problem.

\subsection{Organization}
We expand upon many of Beigel and Eppstein ideas; we discuss their algorithm in Section \ref{sec:2}.
We introduce a new graph structure, the maximal low-magnitude bushy forest, in Section \ref{sec:forest}, which helps us find vertices that can be colored relatively easily.
We determine how the maximal low-magnitude bushy forest allows us to color vertices more quickly in Section \ref{sec:analysis}.
We combine our findings in Section \ref{sec:lp}, where we analyze the runtime of our algorithm by creating a linear program.
We summarize our algorithm in \ref{sec:conclusion}.

\section{Summary of Beigel and Eppstein's algorithm}
\label{sec:2}
In this section, we discuss the time $\mathcal{O}^*(1.3289^n)$ algorithm to solve \textsc{3-coloring} by Beigel and Eppstein \cite{Beigel2005}. 
Before the improvements presented in this paper, the algorithm from Beigel and Eppstein was the fastest known algorithm to solve \textsc{3-coloring}. 
Notably, we created our algorithm by improving upon their algorithm. 
We will present the concepts from Beigel and Eppstein that are important to our improvements. 
Some lemmas are slightly modified from Beigel and Eppstein's algorithm, to optimize them for our algorithm.

Beigel and Eppstein showed that the \textsc{(3,2)-Constraint Satisfaction Problem} (\textsc{(3,2)-CSP}) can be used as a black box to solve \textsc{3-coloring} efficiently.
The \textsc{(3,2)-Constraint Satisfaction Problem} consists of a set of variables, each of which must be assigned one of at most three colors.
The combination of a variable and one of its colors is called a variable-color pair.
The \textsc{(3,2)-Constraint Satisfaction Problem} contains constraints between two variable-color pairs: 
not both vertices can be assigned the color in their respective variable-color pairs. 
We transform \textsc{3-coloring} into \textsc{(3,2)-Constraint Satisfaction Problem} as follows:
\begin{enumerate}
    \item Every variable represents a vertex.
    \item All variables have the same domain of colors as the vertex they represent.
    \item For any two variables representing adjacent vertices, we add a constraint that they cannot both be the same color.
\end{enumerate}

\begin{theorem}[{\cite[Theorem~1]{Beigel2005}}]
There is an algorithm for the \textsc{(3,2)-Constraint Satisfaction Problem} running in time $\mathcal{O}^*(1.36443^n)$, where $n$ is the number of variables.
\end{theorem}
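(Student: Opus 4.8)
The plan is to build a branch-and-reduce algorithm and bound its running time by a measure-and-conquer analysis. We maintain a \textsc{(3,2)-CSP} instance, repeatedly apply polynomial-time \emph{reduction rules} that simplify it without changing satisfiability, and, when no rule applies, we \emph{branch}: we select a variable $x$ and recurse on the instances obtained by fixing $x$ to each color in its domain, deleting $x$ and propagating the removal of the now-forbidden colors from its neighbors' domains. The original instance is a yes-instance if and only if some branch reports yes. The reduction rules should include at least: (i) report infeasibility if some variable has empty domain; (ii) if a variable has a singleton domain, assign it, delete it, and shrink neighbors' domains, cascading; (iii) if a variable $x$ with a $2$-element domain participates in at most two constraints, eliminate $x$ by folding its incident constraints into a single new constraint between the (at most two) neighbors — this is the crucial closure property of \textsc{(3,2)-CSP} under contraction of low-degree variables; (iv) delete duplicate and vacuous constraints; and (v) solve connected components of the constraint graph independently. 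After exhaustive reduction, every variable has domain size $2$ or $3$, and every $2$-domain variable has degree at least $3$.

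For the analysis I would assign weight $1$ to each $3$-domain variable and weight $\beta$ to each $2$-domain variable, for a constant $\beta\in(0,1)$ to be optimized, and let $\mu$ be the total weight. Then $\mu\le n$ always (with equality on an instance arising from \textsc{3-coloring}), so a running-time bound of the form $\mathcal{O}^*(c^{\mu})$ immediately yields $\mathcal{O}^*(c^{n})$. One checks that every reduction rule leaves $\mu$ unchanged or decreases it, so it suffices to bound the branching. When we branch on a variable $x$ of degree $\ge 3$, each branch deletes $x$ (a drop of $\beta$ or $1$) and, through $x$'s constraints, removes colors from neighbors, turning $3$-domains into $2$-domains (a drop of $1-\beta$ each) or $2$-domains into singletons that rule (ii) then eliminates (a further drop). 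Summing over the $2$ or $3$ branches and over the constraint multiplicities produces a finite family of recurrences $T(\mu)\le\sum_i T(\mu-\delta_i)$; the running time is $\mathcal{O}^*(c^{\mu})$ with $c$ the largest branching factor over this family, and one tunes $\beta$ to balance the worst recurrences, targeting $c\le 1.36443$.

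The main obstacle is the case analysis certifying that every fully reduced instance admits a branching choice whose recurrence has branching factor at most $1.36443$. The tight cases are the sparse ones — for example a $2$-domain variable of degree exactly $3$ whose neighbors also have $2$-domains, or a $3$-domain variable of low degree — where too few colors are pruned per branch; these must be handled by looking one step deeper, examining second neighborhoods or combining a branch with the reductions it triggers, to extract the extra decrease. One must also verify that rule (iii) is sound and terminating, that the disconnection rule (v) does not interact badly with the weighting, and that the second-neighborhood arguments never double-count weight. Once all cases are dispatched and $\beta$ is fixed, the stated bound follows.
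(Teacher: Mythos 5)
This theorem is not proved in the paper at all: it is imported verbatim as Theorem~1 of Beigel and Eppstein and used as a black box, so the only thing your write-up can be measured against is their actual argument. Your proposal names the right family of techniques (branch-and-reduce with polynomial-time reductions, a weighted measure, recurrences bounded by a branching factor), but it stops exactly where the theorem's content begins. The constant $1.36443$ is not a generic consequence of ``branch on a variable and tune $\beta$''; it is the outcome of an extensive, explicit case analysis of fully reduced instances (variables of domain size three classified by how many constraints they share with each neighbour, dangling and duplicated constraints, small components handled separately, and a worst case identified among all these configurations). Your own text concedes this: ``the main obstacle is the case analysis \dots once all cases are dispatched and $\beta$ is fixed, the stated bound follows.'' Deferring that analysis is deferring the proof; as written, nothing certifies that any choice of $\beta$ and any branching strategy reaches a factor of $1.36443$ rather than, say, $1.39$, so the claimed bound is unsupported.

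There is also a structural mismatch with how \textsc{(3,2)-CSP} is actually handled, which matters because the surrounding paper relies on it. In Beigel and Eppstein's algorithm, \emph{every} variable whose domain has size at most two is eliminated in polynomial time, regardless of its degree: if $v$ has only colors $a,b$, then for any constraints $((v,a),(w,c))$ and $((v,b),(x,d))$ one adds the implied constraint between $(w,c)$ and $(x,d)$ and deletes $v$. This is the reduction the present paper invokes when it removes all neighbours of colored vertices for free. Your rule (iii) only removes two-colour variables of degree at most two, so your reduced instances still contain two-colour variables and your measure needs the weight $\beta$ for them; in the actual analysis such variables never survive reduction, and the hard cases are entirely about three-colour variables and the multiplicity/structure of their constraints. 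To turn your sketch into a proof you would need both the full two-colour elimination rule (with an argument that it is sound and does not blow up the constraint set) and the complete enumeration of branching cases with their recurrences, verified to have largest root at most $1.36443$.
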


Beigel and Eppstein showed that the \textsc{(3,2)-Constraint Satisfaction Problem} can be solved in time $\mathcal{O}^*(1.36443^n)$.
The reduction does not increase the instance size, so \textsc{3-coloring} can be solved in time $\mathcal{O}^*(1.36443^n)$ as well.

Notably, given a partially colored graph, we can also use an algorithm for \textsc{(3,2)-CSP} to determine whether there exists an assignment of colors to the uncolored vertices, such that it becomes a valid 3-coloring.
In Beigel and Eppstein's algorithm for the \textsc{(3,2)-Constraint Satisfaction Problem}, there exists a reduction rule that can remove any variable with two or fewer colors from the instance in polynomial time.
Neighbors of colored vertices have at most two possible colors. 
So, these vertices can be removed from the instance.
For instance, consider a small set of vertices $S$ that have a large set of neighbors $T$.
There exist at most $3^{|S|}$ valid color assignments for the vertices in $S$. By iterating over these color assignments, we could solve \textsc{3-coloring} in time $\mathcal{O}^*(1.36443^{|V|-|S|-|T|}\cdot3^{|S|})$.

Clearly, when $|T|$ is sufficiently large compared to $|S|$, this will improve the runtime of the algorithm.
In this section, we will explain how Beigel and Eppstein found a small set of vertices to color to improve the algorithm for \textsc{3-coloring} and our adjustments to their lemmas to improve them within the context of the new algorithm.

\subsection{Work Factor}
Beigel and Eppstein's algorithm use many branching rules to eliminate certain cases from the graph.
A branching rule is a strategy to solve an instance by solving several instances recursively.
To analyze these branches, they use the work factor:
\begin{definition}[Work Factor]
    The \textit{work factor}, $\lambda(r_1, r_2, \dots)$, denotes the complexity of a branching rule.
    We explore some number of branches in a branching rule.
    Each of the branches explores an instance with a reduced instance size.
    In the work factor, each value $r_1, r_2, \dots$ denotes the reduction in instance size in a branch.
    Each work factor $\lambda(r_1, r_2, \dots) = c$, where $c$ is the largest zero of the function $f(x)=1-\sum x^{-r_i}$.
    The work factor assists us in calculating the runtime of an algorithm.
    Every step, some branching rule is applied.
    Each branching rule will have some work factor $\lambda$: the algorithm will run within time $\mathcal{O}^*((\max \lambda)^n)$.
    So, if for all work factors $\lambda \leq 1.3217$, the algorithm runs in time $\mathcal{O}^*(1.3217^n)$.
\end{definition}

\subsection{Removing Low-Degree Vertices}
By definition, low-degree vertices have few neighbors.
So, we want to avoid trying all of their color assignments.
Indeed, assigning a color to a vertex of low degree reduces the possible colors of few other vertices.
Furthermore, it is relatively unlikely that they will have a colored neighbor either.
Luckily, we can remove many low-degree vertices.

Vertices with two or more neighbors can be removed from the instance trivially:
their set of neighbors can never contain all three possible colors.
Thus, there will always be a color available for these vertices.

Furthermore, we can also limit the number of vertices of degree three:
we use a branching rule to ensure no connected subgraph of degree-three vertices exists that contains nine or more vertices or any cycle.

\begin{lemma}[{\cite[Lemma~20]{Beigel2005}}]
\label{lem:3cycle}
Let $G$ be a \textsc{3-coloring} instance in which some cycle consists only of degree-three vertices. 
Then we can replace $G$ with smaller instances with work factor at most $1.2433$.
\end{lemma}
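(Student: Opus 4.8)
The plan is to exploit the fact that a cycle of degree-three vertices is highly constrained once we fix colors on a small "seed" portion of it, because in a 3-coloring every vertex on the cycle has exactly one neighbor off the cycle, and the colors propagate rigidly around the cycle. Concretely, let $C = v_1 v_2 \dots v_k v_1$ be a cycle all of whose vertices have degree exactly three in $G$, so each $v_i$ has a unique off-cycle neighbor $u_i$ (the $u_i$ need not be distinct). The idea is to branch on the color of a single vertex, say $v_1$, and on the color of its off-cycle neighbor $u_1$; once these are fixed we want to argue that the colors of all of $v_2, \dots, v_k$ become forced, so that in each branch we effectively remove all $k \geq 3$ cycle vertices while having branched only on a bounded number of choices.

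The key steps, in order, would be: first, handle the parity/structure of the cycle — if $C$ is an even cycle it is 2-colorable on its own, and fixing $v_1$'s color leaves only two ways to 2-color $C$; we branch on which, and in fact the off-cycle neighbors then further constrain things. Second, and more cleanly, use the following propagation observation: if $v_i$ and $v_{i+1}$ are both colored, that is impossible unless they differ, and if only $v_i$ is colored then $v_{i+1}$ has two available colors — not yet forced — so a single colored cycle vertex is not enough. To get forcing I would instead branch on the color of $v_1$ \emph{and} $v_2$ (three ordered choices of distinct colors, hence work factor contribution from a 3-way branch), after which every subsequent $v_i$ for $i \geq 3$ is determined: $v_i \neq v_{i-1}$ and, walking around, the two-coloring pattern is pinned down by the first two vertices up until we "close" the cycle, and closing the cycle either is automatically consistent or kills the branch. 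So each of the (at most) $6$ branches removes all $k$ cycle vertices plus possibly forces colors on some $u_i$'s; the reduction in instance size in each branch is at least $k \geq 3$, but we must be careful that the $u_i$ vertices may already have had reduced domains.

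To actually reach work factor $1.2433$ rather than something weaker, I expect we need the sharper accounting that Beigel and Eppstein use: removing a colored vertex is free (polynomial-time reduction), and a neighbor of a colored vertex drops to domain size $2$ and is then also removed for free; moreover each "measure" unit of reduction may be weighted, and degree-three vertices in long paths/cycles are exactly the configuration their measure is tuned against. So the refined plan is: branch into the $\le 6$ proper colorings of the edge $v_1 v_2$; in each branch, propagate around $C$ to either reach a contradiction (branch dies, contributing nothing) or to color all of $C$; then all $k$ cycle vertices vanish, and additionally every off-cycle neighbor $u_i$ that is not itself on $C$ now has a colored neighbor and hence reduced domain, yielding extra savings. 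Summing $6 \cdot x^{-k}$ with $k \geq 3$ gives a work factor bounded by the largest root of $1 = 6 x^{-3}$, i.e. $6^{1/3} \approx 1.817$, which is \emph{too weak} — so the genuine obstacle, and the place where I expect the real work to live, is showing that the off-cycle neighbors contribute enough additional reduction (or that only $3$ rather than $6$ branches are genuinely needed, via symmetry of color names, giving $3^{1/3} \approx 1.442$, still not enough), meaning one must combine the cycle-removal with the removal of several of the $u_i$ and exploit that $k$ can be taken fairly large after first applying the path-shortening rule. I would therefore first invoke whatever rule bounds the length of induced degree-three paths, so that either $k$ is large (direct win) or the cycle is short and can be case-analyzed by hand, and the hard part will be the bookkeeping that turns "cycle plus its private neighbors" into a reduction large enough to beat $1.2433$.
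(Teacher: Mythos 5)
Your central step fails. In a 3-coloring, once $v_1$ and $v_2$ are colored, the vertex $v_3$ merely loses the color of $v_2$: since its off-cycle neighbor $u_3$ is still uncolored, $v_3$ retains \emph{two} admissible colors, and the same holds all the way around. So colors do not propagate rigidly along the cycle for any $k \ge 4$ (only in a triangle is the third vertex forced, being adjacent to both colored vertices); the ``two-coloring pattern is pinned down'' claim is exactly 2-coloring propagation, which is not what a 3-coloring does. Your even-cycle preliminary step has the same defect: a valid 3-coloring of $G$ need not restrict to a 2-coloring of $C$, so branching over the two 2-colorings of $C$ extending $v_1$'s color is not exhaustive. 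Without forcing, each of your branches only colors two vertices (plus removes their domain-reduced neighbors), which cannot yield a work factor anywhere near $1.2433$. Note also that the paper itself does not reprove this lemma; it defers entirely to Beigel and Eppstein's Lemma~20, whose device is the one used in the proof of Lemma~\ref{lem:3subgraph} here: branch on \emph{which two of the three neighbors} of a degree-three cycle vertex receive the same color, merge that pair, delete the now degree-two vertex, and harvest the cascade of degree-two deletions that runs around the cycle. No explicit colors are enumerated, so no factor of $3$ or $6$ is ever paid.

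Even granting your forcing claim, your own accounting does not close the gap, and the proposed repair is circular. Six branches each shrinking the instance by the $k$ cycle vertices gives work factor $6^{1/k}$, which exceeds $1.2433$ for every $k \le 8$ (and $3^{1/k}$ for every $k \le 9$), so short cycles are fatal; but the rule you want to invoke first to control $k$ --- the bound on connected subgraphs of degree-three vertices (Lemma~\ref{lem:3subgraph}; Beigel and Eppstein's Lemma~21) --- is proved under the assumption that cycles of degree-three vertices have already been eliminated, and in any case it caps such subgraphs at eight vertices, i.e., it would force $k$ to be \emph{small}, the bad regime for your scheme. The missing idea is precisely the merge-pair branching: merging two neighbors of a cycle vertex reduces the instance without naming any color, and because every cycle vertex has degree exactly three, deleting one vertex drops its cycle neighbors to degree two and the deletions cascade along the cycle, producing per-branch reductions large enough to reach $1.2433$. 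As written, your argument establishes no bound better than roughly $1.34$ even in its most optimistic reading, so it does not prove the lemma.
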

\begin{proof}
See Beigel and Eppstein\cite[Lemma~20]{Beigel2005}.
\end{proof}

The next lemma has been adapted from Beigel and Eppstein \cite[Lemma~21]{Beigel2005}.
Originally, it removed connected subgraphs of eight or more degree-three vertices, but we edit this to nine or more degree-three vertices.
\begin{lemma}
\label{lem:3subgraph}
Let $G$ be a \textsc{3-coloring} instance containing a connected subgraph of nine or more
degree-three vertices. Then we can replace $G$ with smaller instances with work factor at most $1.3022$.
\end{lemma}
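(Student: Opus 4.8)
The plan is to mirror Beigel and Eppstein's original Lemma~21 \cite{Beigel2005}, redoing the bookkeeping for the weaker hypothesis of nine rather than eight degree-three vertices. The first step is to dispose of the cyclic case: if the connected subgraph of degree-three vertices contains a cycle, then Lemma~\ref{lem:3cycle} already replaces $G$ with smaller instances at work factor at most $1.2433 \le 1.3022$, and we are done. Hence we may assume the subgraph is a tree $T$ with at least nine vertices; since every vertex of $T$ has degree exactly three in $G$, the tree $T$ has maximum degree three, with each vertex's remaining $G$-edges leading outside $T$.

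The branching rule is to pick a well-placed vertex $v \in T$ and try each of its three colors. The key observation is that, once $v$ is colored, each of its three $G$-neighbours has at most two available colors and is removed by the reduction rules of Section~\ref{sec:2}; each such removal lowers the $G$-degree of further vertices of $T$, and so triggers a cascade of removals of vertices that have dropped to degree two. (When $T$ is a path, this cascade in fact destroys all of $T$ in every branch.) The number of vertices deleted in a branch is thus a function of the shape of $T$; counting it in each of the three branches gives a reduction vector, and whenever a connected subgraph of degree-three vertices survives I would recurse the same rule on it and fold the result into a composite reduction vector. Choosing $v$ near a centroid of $T$ keeps the surviving pieces balanced, which is what governs the vector; the small cases, say $9 \le |T| \le 12$, are checked by hand.

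The hard part will be the worst-case analysis over the shapes of maximum-degree-three trees on nine (and a few more) vertices: the adversary picks the most highly branched tree, so that coloring a single vertex deletes as few vertices as possible and the split $T \setminus N_G[v]$ is as unbalanced as the structure permits, and one must verify that even then the resulting (possibly composite) reduction vector has work factor at most $1.3022$ in Beigel and Eppstein's measure, in which low-degree vertices are weighted. Because raising the threshold from eight to nine only gives us an extra vertex to spend, this amounts to re-running their Lemma~21 computation with the worst configuration shifted from $|T| = 8$ to $|T| = 9$; evaluating the work factor on these finitely many configurations, together with Lemma~\ref{lem:3cycle} for the cyclic case, yields the claimed bound $1.3022$.
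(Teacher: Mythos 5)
There is a genuine gap, and it lies in your branching rule. The paper does not branch on the three colors of $v$; it branches on which \emph{pair} of $v$'s three neighbors receives the same color (since the neighborhood of a degree-three vertex must be 2-colorable, some pair is monochromatic), and in each branch it \emph{merges} that pair. This is what drives the whole count: after the merge, $v$ itself has degree two and is deleted, and the deletion cascades through exactly the component of $T-v$ hanging off the neighbor that was \emph{not} merged, so across the three branches the cascades consume disjoint pieces of $T-v$ summing to $n-1\ge 8$. Choosing $v$ to be a vertex with three neighbors in the subgraph, or a balanced split vertex when $T$ is a path, yields $\lambda(2+\alpha_1,2+\alpha_2,2+\alpha_3)$ with $\alpha_1+\alpha_2+\alpha_3=n-1$ and worst case $(0,4,4)$, i.e.\ $\lambda(2,6,6)=1.3022$. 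Your rule cannot reproduce this asymmetry: the three color branches are symmetric, so you get $\lambda(k,k,k)=3^{1/k}$ where $k$ is the guaranteed reduction per branch, and you would need $k\ge 5$ just to beat $1.3022$ (with $k=4$ you get $3^{1/4}\approx 1.316>1.3022$).

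Moreover, the cascade you invoke to make $k$ large is not justified. Coloring $v$ lets you eliminate $v$ and its three neighbors (they have at most two remaining colors and are removed by the \textsc{(3,2)-CSP} reduction), but this elimination is a CSP-style variable removal that keeps the neighbors' constraints in the instance; it is not a graph deletion that lowers the $G$-degree of the vertices one step further out in $T$. Those vertices remain three-valued and are not removable for free, so the cascade stops at distance one, the hypothesis $|T|\ge 9$ is never actually used, and the guaranteed work factor of your rule is only about $1.316$. (Your reduction of the cyclic case to Lemma \ref{lem:3cycle} is fine, and matches the implicit tree assumption in the paper's proof, but the core of the lemma --- the merge-based branching, the choice of $v$, and the explicit worst case $\lambda(2,6,6)$ --- is exactly what your proposal defers to ``re-running Beigel and Eppstein's computation'' without supplying it.)
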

\begin{proof}
Notice that the neighbors of any vertex must be 2-colorable.
So, if a vertex has three neighbors, at least two of its neighbors must receive the same color.
There are three possible pairs of neighbors to select, we will explore each of these in a branch.

When we select two vertices to have the same color, we can merge the two vertices:
the new vertex will neighbor all vertices that either of the two original vertices neighbored.
As the merged vertices neighbored the same vertex of degree three, this vertex will have degree two after the merge.
At this point, this vertex can be removed from the instance.
Furthermore, after removing it, we also remove any of its neighbors of degree three.

Suppose that a connected subgraph of degree-three vertices has $n \geq 9$ vertices.
We will now choose a vertex $v$, whose neighbors we will merge in the three branches.
In case there exists a vertex in the subgraph with three neighbors in the vertex, let any such vertex be $v$.
Otherwise, the subgraph must be a path.
In this case, select a vertex to be $v$, such that removing this vertex from the subgraph causes the subgraph to be split into connected components of size $\lceil \frac{n}{2} \rceil$ and $\lfloor \frac{n}{2} \rfloor$.

\begin{figure}[H]
\centering
\begin{tikzpicture}[main/.style = {draw, circle}, minimum size=0.6cm]
\node[main] (1) {v}; 
\node[main] (2) [below left of=1] {};
\node[main] (3) [left of=2] {};
\node[main] (4) [below of=2] {};
\node[main] (5) [above of=1] {};
\node[main] (6) [above of=5] {};
\node[main] (7) [below right of=1] {};
\node[main] (8) [below of=7] {};
\node[main] (9) [right of=7] {};

\draw[] (1) -- (2);
\draw[] (2) -- (3);
\draw[] (2) -- (4);
\draw[] (1) -- (5);
\draw[] (5) -- (6);
\draw[] (1) -- (7);
\draw[] (7) -- (8);
\draw[] (7) -- (9);

\end{tikzpicture}
\hspace{2cm}%
\begin{tikzpicture}[main/.style = {draw, circle}, minimum size=0.6cm]
\node[main,fill=lightgray] (1) {v}; 
\node[main] (27) [below of=1] {};
\node[main] (3) [left of=2] {};
\node[main] (4) [below of=2] {};
\node[main,fill=lightgray] (5) [above of=1] {};
\node[main,fill=lightgray] (6) [above of=5] {};
\node[main] (8) [below of=7] {};
\node[main] (9) [right of=7] {};

\draw[] (1) -- (27);
\draw[] (27) -- (3);
\draw[] (27) -- (4);
\draw[] (1) -- (5);
\draw[] (5) -- (6);
\draw[] (1) -- (27);
\draw[] (27) -- (8);
\draw[] (27) -- (9);

\end{tikzpicture} 
\caption{A connected subgraph of nine degree-three vertices (left), and one of the three branches which merged two neighbors (right). The bottom neighbors of $v$ are merged. \textcolor{gray}{Gray}: vertices can be removed from the instance.}
\label{fig:ex31}
\end{figure}
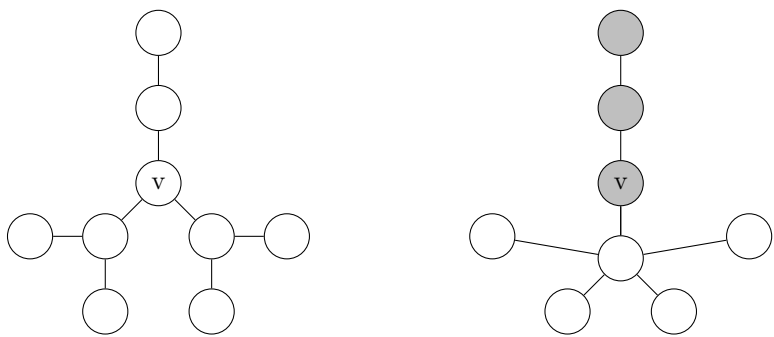

In each branch, $v$ will be removed from the instance and two of its neighbors are merged into one:
each branch reduces the instance size by at least one.
Then, the neighbor of $v$ that was not merged will lose a neighbor and possibly have only two remaining vertices.
In this case, it is removed as well.
We continue this process until no vertices of degree two exist within the subgraph.
There exists exactly one path from each vertex in the subgraph to $v$, so every vertex (besides $v$) can be removed from the instance in exactly one subgraph.
So, we get a work factor $\lambda(2+\alpha_1, 2+\alpha_2, 2+\alpha_3)$, where $\alpha_1+\alpha_2+\alpha_3=n-1$.
Furthermore, if any of $\alpha_1$, $\alpha_2$, or $\alpha_3$ is zero, then the other two must be $\lceil \frac{n}{2} \rceil$ and $\lfloor \frac{n}{2} \rfloor$.

The worst-case values for $\alpha_1$, $\alpha_2$, and $\alpha_3$ are $0$, $4$, and $4$.
We get a work factor of $\lambda(2, 6, 6)=1.3022$.
\end{proof}

\subsection{Coloring Vertices Faster}
After limiting the number of low-degree vertices in the graph, Beigel and Eppstein used a graph structure called a bushy forest to find a small set of vertices with a large set of neighboring vertices.

\begin{definition}[Bushy Forest]
A \textit{bushy forest} is a forest where every tree has at least one internal vertex and each internal vertex must be adjacent to at least four other vertices in the tree.
A bushy forest is maximal, when there does not exist any vertex outside the bushy forest with four neighbors outside the bushy forest, there does not exist any leaf in the bushy forest with three neighbors outside the bushy forest, and there does not exist any vertex outside the bushy forest adjacent to an internal vertex of the bushy forest.
\end{definition}

We find a maximal bushy forest and color the internal vertices of the bushy forest.
All leaves are adjacent to some internal vertex, so all leaves will have a colored neighbor.
Each internal vertex must have at least four neighbors within the bushy forest, so we expect many leaves to exist within the bushy forest.
To analyze the number of possible color assignments, we partition the vertices of the graph based on their relation to the maximal bushy forest:

\begin{itemize}
\item $R$: the root vertices of the trees in the bushy forest, where for each tree one arbitrary internal vertex is chosen to be the root vertex.
\item $I$: all internal vertices of the bushy forest that are not rood vertices.
\item $L$: the leaves of the bushy forest.
\item $N$: vertices outside the bushy forest that neighbor the bushy forest.
\item $U$: vertices outside the bushy forest that do not neighbor the bushy forest.
\end{itemize}

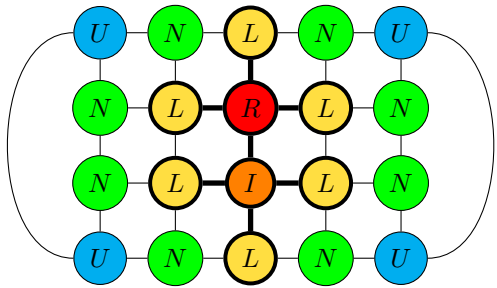
\begin{figure}[H]
\centering
\begin{tikzpicture}[main/.style = {draw, circle}, node distance=1cm, minimum size=0.6cm]
\node[main,fill=red,line width=1.5pt] (i1) {$R$}; 
\node[main,fill=orange,line width=1.5pt] (i2) [below of=1] {$I$};

\node[main,fill=Goldenrod,line width=1.5pt] (l1) [left of=i1] {$L$};
\node[main,fill=Goldenrod,line width=1.5pt] (l2) [above of=i1] {$L$};
\node[main,fill=Goldenrod,line width=1.5pt] (l3) [right of=i1] {$L$};
\node[main,fill=Goldenrod,line width=1.5pt] (l4) [right of=i2] {$L$};
\node[main,fill=Goldenrod,line width=1.5pt] (l5) [below of=i2] {$L$};
\node[main,fill=Goldenrod,line width=1.5pt] (l6) [left of=i2] {$L$};

\node[main,fill=green] (u1) [left of=l1] {$N$};
\node[main,fill=green] (u2) [left of=l2] {$N$};
\node[main,fill=green] (u3) [left of=l5] {$N$};
\node[main,fill=green] (u4) [left of=l6] {$N$};
\node[main,fill=green] (u5) [right of=l3] {$N$};
\node[main,fill=green] (u6) [right of=l2] {$N$};
\node[main,fill=green] (u7) [right of=l5] {$N$};
\node[main,fill=green] (u8) [right of=l4] {$N$};
\node[main,fill=cyan] (u9) [left of=u2] {$U$};
\node[main,fill=cyan] (u10) [left of=u3] {$U$};
\node[main,fill=cyan] (u11) [right of=u6] {$U$};
\node[main,fill=cyan] (u12) [right of=u7] {$U$};

\draw[line width=2.0pt] (i1) -- (i2);
\draw[line width=2.0pt] (i1) -- (l1);
\draw[line width=2.0pt] (i1) -- (l2);
\draw[line width=2.0pt] (i1) -- (l3);
\draw[line width=2.0pt] (i2) -- (l4);
\draw[line width=2.0pt] (i2) -- (l5);
\draw[line width=2.0pt] (i2) -- (l6);

\draw[] (l3) -- (l4);
\draw[] (l1) -- (l6);

\draw[] (l1) -- (u1);
\draw[] (l1) -- (u2);
\draw[] (l2) -- (u2);
\draw[] (l2) -- (u6);
\draw[] (l3) -- (u6);
\draw[] (l3) -- (u5);
\draw[] (l4) -- (u8);
\draw[] (l4) -- (u7);
\draw[] (l5) -- (u7);
\draw[] (l5) -- (u3);
\draw[] (l6) -- (u3);
\draw[] (l6) -- (u4);

\draw[] (u1) -- (u4);
\draw[] (u5) -- (u8);
\draw[] (u2) -- (u9);
\draw[] (u1) -- (u9);
\draw[] (u3) -- (u10);
\draw[] (u4) -- (u10);
\draw[] (u6) -- (u11);
\draw[] (u5) -- (u11);
\draw[] (u7) -- (u12);
\draw[] (u8) -- (u12);
\draw (u9) to [out=180, in=180] (u10);
\draw (u11) to [out=0, in=0] (u12);

\end{tikzpicture}
\caption{A maximal bushy forest (in bold) consisting of a single tree in a graph. \textcolor{red}{Red}: root vertices ($R$). \textcolor{orange}{Orange}: internal vertices ($I$). \textcolor{Goldenrod}{Yellow}: leaves ($L$). \textcolor{green}{Green}: neighbors to the bushy forest ($N$). \textcolor{cyan}{Cyan}: other vertices ($U$).}
\label{fig:exbf}
\end{figure}

Given a graph with any maximal bushy forest, we use these partitions to analyze the runtime of the algorithm.
For $R$ and $I$, the internal vertices of $F$, we try every possible color assignment. 
Every root vertex has three possible colors, so there are $3^{|R|}$ possible color assignments for vertices in $|R|$.
Afterward, every tree will have its root vertex colored.
Vertices in $I$ adjacent to $R$ have two available colors.
We color every vertex in $I$ after a neighbor has been colored: there are only $2^{|I|}$ possible color assignments for vertices in $I$.
Then, every vertex in $L$ will have at least one colored neighbor, meaning that these can be removed from the instance in polynomial time.
This leaves vertices in $N$ and $U$ as the vertices that need to be solved by the \textsc{(3,2)-Constraint Satisfaction Problem}: this takes time $\mathcal{O}^*(1.36443^{|N|+|U|})$. 
Overall, it takes time $\mathcal{O}^*(3^{|R|}\cdot2^{|I|}\cdot1.36443^{|N|+|U|})$ to solve \textsc{3-coloring} given a bushy forest $F$.

Luckily, we can also color some vertices in $N$ and $U$ to improve the running time of the algorithm.
To do this, we find another forest over the vertices $V$ not covered by the bushy forest $F$.
We denote this graph as $G[V-F]$: the induced subgraph of $V-F$ in $G$.

\begin{definition}[Chromatic Forest]
    A forest of rooted trees, where the root of each tree has exactly three children and at most five grandchildren. Each child of the root has at most two children itself.
\end{definition}

Beigel and Eppstein showed that there must exist a chromatic forest that covers all vertices in $U$, a \textit{maximal chromatic forest}. 
Beigel and Eppstein proved that the following algorithm will always find a valid maximal chromatic forest. 
Here, a $K_{1,3}$ tree is a tree with a root vertex and three children.

\begin{enumerate}
    \item Create a maximal forest of $K_{1,3}$ trees.
    \item While possible, remove a $K_{1,3}$ tree and add two new ones.
    \item Assign all remaining vertices in $U$ as a grandchild to some $K_{1,3}$ tree, such that no tree has six grandchildren.
\end{enumerate}

Beigel and Eppstein's analysis of the trees in the chromatic forest was slightly different from our analysis.
They expressed the number of color assignments in the number of degree-three vertices.
Instead, we express the number of color assignments in the number of vertices in $G[V-F]$.
We do this, such that we can further improve the running time by proving that some vertices with more than three neighbors must be included in the maximal chromatic forest.

For the next lemma, see Beigel and Eppstein for a similar proof \cite[Lemma~24]{Beigel2005}.
There, they express the runtime to color a tree in a chromatic forest in the number of degree-three vertices in the tree.
Furthermore, they also assume that there can be at most eight vertices in a connected subgraph of degree-three vertices.

\begin{lemma}
\label{lem:chromaticforest}
Let $T'$ be a tree in a chromatic forest. Then $T'$ can be colored with a work factor of $1.34004$.
\end{lemma}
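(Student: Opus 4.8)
The plan is to bound the work factor of a branching rule that colours the root of the tree $T'$, then propagates forced colours down through the children and grandchildren. The chromatic-forest tree $T'$ is a rooted tree in which the root $\rho$ has exactly three children $c_1, c_2, c_3$, each $c_i$ has at most two children, and $\rho$ has at most five grandchildren in total. So $|T'| \le 1 + 3 + 5 = 9$ vertices. The key observation, inherited from Beigel and Eppstein, is that the three children of $\rho$ are pairwise non-adjacent only in the generic case, but more importantly that once $\rho$ is coloured, each $c_i$ drops to two available colours, and once a $c_i$ is coloured, each of its (at most two) children drops to two available colours and can be removed by the \textsc{(3,2)-CSP} reduction rule. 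What actually drives the branching is the same trick used in Lemma~\ref{lem:3subgraph}: rather than trying all three colours of $\rho$, we exploit that the neighbourhood of $\rho$ (its three children, plus possibly a neighbour in $N$) must be $2$-colourable, so at least two of the three children must share a colour; we branch on which pair, giving three branches, and in each branch we merge that pair of children.

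**First I would** set up the branch on $\rho$'s three children: in each of the three branches we identify two of $c_1, c_2, c_3$, say we merge $c_1$ and $c_2$. This removes one vertex outright (the merge), and it forces $\rho$ down to two colours; but more is gained because the merged child now carries the union of up to four grandchildren, and we still know $\rho$ has at most five grandchildren total, so the structure stays controlled. Then, crucially, in each branch we further branch on the colour of $\rho$ — now only two choices — and once $\rho$ is coloured, every child has two available colours, so every grandchild (having a coloured parent) is removed for free by the reduction rule. So the grandchildren never cost anything; the only vertices that "cost" a factor are $\rho$ and the three children, of which one is eliminated by merging. I would count, in each branch: $+1$ for the merge, then a factor-$2$ branch on $\rho$, then the two remaining children each contribute a factor-$2$ branch (two available colours each), but each such colour choice also removes that child's grandchildren and, via the same merging logic applied recursively if a child itself has three forced-equal neighbours, pushes the reduction further. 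The arithmetic has to be arranged so that the resulting work factor is $\lambda$ of a tuple whose largest root is $1.34004$ — e.g. something in the spirit of $\lambda(2, 2, \dots)$ summed over the $3 \times 2$ leaf-branches, i.e. six leaves each reducing the instance by roughly the vertex count they eliminate; with $|T'| \le 9$ and the six leaves partitioning the eliminated vertices, the worst split gives the claimed bound.

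**The hard part will be** pinning down exactly which vertices are removable in the worst case — specifically, handling the children that have fewer than two children of their own, and the case where some of $\rho$'s children are mutually adjacent (so merging two of them creates a self-loop, immediately killing that branch, which is actually favourable). The delicate bookkeeping is that a child with zero or one grandchildren contributes less "free" removal, so the adversary will choose $T'$ to make the three children as grandchild-poor as possible while still respecting that the tree was produced by the maximal-chromatic-forest construction (step 2 of that construction guarantees a certain density — you cannot remove a $K_{1,3}$ and add two new ones, which lower-bounds how many vertices are attached). I would need to argue that any such $T'$ still yields, after the three-way merge branch and the subsequent two-way colour branches, a work factor no worse than $\lambda(2,6,6)$-style expressions, and then verify numerically that the maximum over all admissible tree shapes is $1.34004$. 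I expect the proof to conclude by tabulating the few non-isomorphic admissible shapes of $T'$ (there are only a handful, since $|T'| \le 9$ with a rigid root structure), computing the work factor for each, and taking the maximum.
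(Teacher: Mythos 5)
There is a genuine gap, and it sits at the heart of your accounting. You claim that once the root $\rho$ is coloured, ``every grandchild (having a coloured parent) is removed for free by the reduction rule.'' That is false: colouring $\rho$ only restricts each child $c_i$ to two available colours; it does not assign $c_i$ a colour. The \textsc{(3,2)-CSP} reduction rule removes the two-colour variables $c_i$ themselves, but the grandchildren are adjacent only to the (uncoloured) children, so they keep all three colours and must still be handled by the $\mathcal{O}^*(1.36443^n)$ CSP algorithm. This is exactly the cost your analysis throws away, and it is exactly where the constant comes from: the paper's proof, in the case of at most four grandchildren, simply tries all three colours of $\rho$ (no merging of children at all), removes the three children for free, and pays $1.36443$ per surviving grandchild, giving an \emph{amortized per-vertex} cost of $(3\cdot 1.36443^4)^{1/8} < 1.34004$ over the (at most) eight tree vertices. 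In the five-grandchildren case the paper instead enumerates the $9$ colour assignments of the two children that have two grandchildren each: if they agree, the root and their four grandchildren all gain a coloured neighbour; if they differ, the root's colour is forced; this yields $(3\cdot 1.36443^2 + 6\cdot 1.36443)^{1/9} < 1.34004$. With your ``grandchildren are free'' assumption you would derive a much smaller number, and the later linear-program term $\log(1.34004)\cdot|U^*|$ would no longer be justified.

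A second, related problem is that you treat the lemma's ``work factor'' as a branching-vector quantity $\lambda(r_1,r_2,\dots)$ in the style of Lemma \ref{lem:3subgraph}, and propose a pair-merging branch on $c_1,c_2,c_3$ followed by a two-way branch on $\rho$. That machinery does not connect to the claimed bound: the $1.34004$ here is not the root of $1-\sum x^{-r_i}$ for any vector you exhibit, but an amortized per-vertex cost of the form (number of branches times CSP cost of leftover vertices)$^{1/|T'|}$. Your proposal never performs this computation -- it defers to ``verify numerically that the maximum over admissible shapes is $1.34004$'' -- so even setting aside the free-grandchildren error, the step that actually produces the constant is missing. The fix is to drop the merging scheme, branch directly on colours (root in the $\le 4$-grandchild case, the two grandchild-rich children in the $5$-grandchild case), charge $1.36443$ for every grandchild that remains uncoloured, and amortize over the tree's vertex count.
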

\begin{proof}
Presume $T$ has at most four grandchildren. 
In this case, we give the root one of the three possible colors. 
This removes one color as a possibility for its three children, allowing those to be removed from the instance by the \textsc{(3,2)-CSP} algorithm. 
The grandchildren remain and will have to be solved by the time $\mathcal{O}^*(1.36443^n)$ algorithm. 
As every grandchild is colored by the $\mathcal{O}^*(1.36443^n)$ algorithm, they are relatively slow to color. 
The worst-case scenario happens when there are as many grandchildren as possible: four.
We iterate over all three possible color assignments of the root vertex, while four vertices have to be colored by the \textsc{(3,2)-CSP} algorithm. On average, the runtime required per vertex in the tree to color all vertices in the tree is $(3\cdot 1.36443^4)^\frac{1}{8} < 1.34004$.

Now, let $T$ be a tree with exactly five grandchildren.
Vertices in the chromatic forest have at most three neighbors, so every child has at most two grandchildren.
There are five grandchildren, so two children have two grandchildren, while the other child has one grandchild. 
We select the two children with two grandchildren and iterate over all color assignments for these vertices. 
If they receive the same color, the root, and their combined four grandchildren will all have a colored neighbor.
This means that only the remaining other child and grandchild are colored through the \textsc{(3,2)-Constraint Satisfaction Problem}. If they receive different colors, the color of the root must be the third color.
Then, only the last grandchild is colored by the \textsc{(3,2)-CSP} algorithm. 
This results in a runtime per vertex of $(3\cdot 1.36443^2 + 6\cdot 1.36443)^\frac{1}{9} < 1.338302 < 1.34004$.

Thus, the worst case is a tree with four grandchildren, which leads to the claim of the lemma that in the worst case, we can color a tree $T'$ with work factor $1.34004$.
\end{proof}

We have summarized all of Beigel and Eppstein's concepts that are important for our algorithm.
We remove vertices with fewer than three neighbors from the graph; and connected subgraphs of degree-three vertices that contain a cycle or nine or more vertices. 
Afterward, we find a maximal bushy forest and a maximal chromatic forest to find a set of vertices with many neighbors.
We color specific vertices in the two forests, which allows us to remove their neighbors from the instance.
The remaining vertices will be colored using the \textsc{(3,2)-CSP} algorithm.


\section{Limiting Difficult-To-Color Vertices}
\label{sec:forest}
We will now discuss our main improvement: 
a modification to the bushy forest that further restricts the number of vertices that can exist outside the bushy forest. 
We will limit the existence of high-magnitude vertices in the set $N$.
This vertex enables many vertices to exist outside the bushy forest, which we want to avoid.

\begin{definition}[High-Magnitude Vertex]
A vertex adjacent to a maximal bushy forest is called a \textit{high-magnitude vertex} if it has three neighbors outside the maximal bushy forest.
\end{definition}
\begin{definition}[Maximal Low-Magnitude Bushy Forest]
A maximal bushy forest is a \textit{maximal low-magnitude bushy forest} where all the adjacent high-magnitude vertices are adjacent to a tree with a single internal vertex and four leaves. 
Every two high-magnitude vertices adjacent to the same tree must share a common neighbor. 
This neighbor must either be the leaf of this tree, or a vertex outside the bushy forest.
\end{definition}

We can transform any given maximal bushy forest into a maximal low-magnitude bushy forest in polynomial time.
If there exists a high-magnitude vertex that causes the bushy forest to not be a low-magnitude bushy forest, we can modify the tree adjacent to the high-magnitude vertex.
Every such modification either adds more internal vertices or adds a new tree to the bushy forest.
Furthermore, no modification will decrease the number of trees in the bushy forest.
The bushy forest can never include more trees or internal vertices than the number of vertices in the graph, we will find a maximal low-magnitude bushy forest.
In case a modification causes the bushy forest to not be maximal anymore, we add a new internal vertex to the bushy forest.
This will also increase the number of internal vertices.

\begin{lemma}
Let $G$ be a graph in which all vertex degrees are three or more, in which there is no cycle of degree-three vertices nor any connected subgraph of nine or more degree-three vertices. Then there exists a maximal low-magnitude bushy forest, which we can find in polynomial time.
\end{lemma}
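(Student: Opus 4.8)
The plan is to exhibit an explicit polynomial-time procedure that starts from any maximal bushy forest and repairs the high-magnitude violations, then to argue the procedure terminates in polynomially many steps, and finally to check that the output is still a maximal bushy forest. Existence of an initial maximal bushy forest follows by the standard greedy construction (repeatedly add an internal vertex with four available neighbours, or a leaf with three available neighbours, as in Beigel and Eppstein); the degree hypotheses guarantee this process starts and the hypothesis on degree-three subgraphs is what will eventually bound how long things can go wrong. So the real content is the repair step and its termination.

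First I would set up the repair loop. As long as the current forest $F$ is not low-magnitude, there is a high-magnitude vertex $v \in N$ witnessing one of the two failure modes: either $v$ is adjacent to a tree $T$ that is \emph{not} a single internal vertex with exactly four leaves, or two high-magnitude vertices $v, w$ adjacent to the same tree fail to share a common neighbour that is either a leaf of $T$ or a vertex of $U \cup N$. In each case I would describe a local surgery on $T$: roughly, since $v$ has three neighbours outside $F$, those three neighbours together with $v$ can be reorganised to either enlarge $T$'s set of internal vertices, or to spawn a fresh $K_{1,3}$-style tree rooted near $v$, pulling some of $N\cup U$ into $F$. The key bookkeeping quantity is the pair $(\#\text{trees in }F,\ \#\text{internal vertices of }F)$ ordered lexicographically (or better, the single potential $\Phi(F) = \#\text{trees} + \#\text{internal vertices}$, which is what the excerpt hints at). I would verify that every surgery strictly increases $\#\text{trees}$ or strictly increases $\#\text{internal vertices}$ while never decreasing $\#\text{trees}$, hence strictly increases $\Phi$; since $\Phi \le n$, the loop runs at most $n$ times, and each iteration is clearly polynomial, giving the polynomial-time claim.

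After a surgery the forest may cease to be \emph{maximal}: there could now be a vertex outside $F$ with four available outside-neighbours, a leaf with three, or an outside vertex adjacent to an internal vertex. I would handle this exactly as the excerpt suggests — reapply the maximalisation greedily, adding internal vertices (and new trees) until maximality is restored. Crucially each such addition again increases $\Phi$, so re-maximalisation does not break the termination argument; I would fold these additions into the same potential count. When the loop finally exits, $F$ is by construction both maximal and low-magnitude, which is the statement.

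The main obstacle I expect is the case analysis in the surgery step: one must show that \emph{every} way a high-magnitude vertex can violate low-magnitude-ness admits a repair that strictly raises $\Phi$. In particular the ``shared common neighbour'' clause is delicate — when $v$ and $w$ are high-magnitude at the same small tree but have no qualifying common neighbour, one has to check that the three outside-neighbours of $v$ (respectively $w$) really do furnish enough structure to build an extra tree or internal vertex, using that $v,w\notin F$ and that they already have three outside-neighbours each. Here is where the hypothesis forbidding large connected degree-three subgraphs and degree-three cycles is used: it rules out the pathological configurations where the candidate new tree would collide with $T$ or with another tree, so that the surgery is always well-defined. I would organise the argument so that this hypothesis is invoked in exactly one or two places and otherwise keep the surgeries purely local.
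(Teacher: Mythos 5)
Your overall skeleton coincides with the paper's: start from a greedily built maximal bushy forest, repeatedly apply local surgeries at high-magnitude violations, argue termination via a potential (number of trees plus number of internal vertices, never decreasing the tree count and strictly increasing the potential, bounded by $n$), and fold re-maximalisation steps into the same count. That framework is exactly right. However, the substance of the paper's proof is precisely the part you defer: an explicit verification that \emph{every} violation of the low-magnitude condition admits such a surgery. The paper does this by a four-case analysis: (i) a high-magnitude vertex $v$ adjacent (via a leaf $l$) to a tree with more than one internal vertex --- replace that tree by one rooted at an internal vertex $r$ not adjacent to $l$ and a new tree rooted at $v$ with leaves $l$ and the three outside-neighbours of $v$; (ii) $v$ adjacent to a single-internal-vertex tree with five or more leaves --- detach $l$ and root a new tree at $v$; (iii) two high-magnitude vertices $v,w$ at distinct leaves of the same tree, non-adjacent and with no common neighbour --- delete that tree and root disjoint new trees at $v$ and $w$; (iv) same but $v,w$ adjacent --- delete the tree and build one tree with $v$ and $w$ both internal. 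Each case must be checked to yield a valid bushy tree (every internal vertex with at least four tree-neighbours) and to increase the potential; your proposal only asserts that the three outside-neighbours ``can be reorganised'' and explicitly names this case analysis as the open obstacle, so the claim is not yet proved.

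A smaller point: you expect the hypothesis forbidding degree-three cycles and large degree-three subgraphs to be what makes the surgeries well-defined (avoiding collisions between the new tree and $T$ or other trees). In the paper's argument that hypothesis plays no role in the surgery step: well-definedness comes for free from the definition of a high-magnitude vertex --- its three extra neighbours lie \emph{outside} the bushy forest, so together with the leaf $l$ they immediately furnish the four leaves of a new tree, and in case (iii) the disjointness of the two new trees is exactly the assumption that $v$ and $w$ are non-adjacent, attach to distinct leaves, and share no outside neighbour. So the delicate clause you worry about is resolved by the case split (iii)/(iv), not by the degree-three hypotheses, which are carried along from the preprocessing lemmas rather than used here.
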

\begin{proof}
First, we find a maximal bushy forest in polynomial time: greedily add new trees to the bushy forest, and add new internal vertices to trees, until this is no longer possible.
We will now modify this maximal bushy forest to find a maximal low-magnitude bushy forest. 
Specifically, we will present cases where there exists a high-magnitude vertex adjacent to a tree with multiple internal vertices, more than four leaves, or a tree with a pair of high-magnitude vertices that do not share a common neighbor.

\begin{figure}[H]
\centering
\begin{tikzpicture}[main/.style = {draw, circle}, node distance=1cm, minimum size=0.6cm]
\node[main,line width=1.5pt,fill=Goldenrod] (1) {l}; 
\node[main,line width=1.5pt,fill=orange] (2) [below of=1] {};
\node[main,line width=1.5pt,fill=red] (3) [left of=2] {r};
\node[main,line width=1.5pt,fill=Goldenrod] (4) [below of=3] {};
\node[main,line width=1.5pt,fill=Goldenrod] (5) [below of=2] {};
\node[main,line width=1.5pt,fill=Goldenrod] (6) [left of=3] {};
\node[main,line width=1.5pt,fill=Goldenrod] (7) [right of=2] {};
\node[main,line width=1.5pt,fill=Goldenrod] (8) [above of=3] {};
\node[main] (9) [above of=1,fill=lime] {v};
\node[main] (10) [above of=9,fill=cyan] {};
\node[main] (11) [left of=9,fill=cyan] {};
\node[main] (12) [right of=9,fill=cyan] {};

\draw[line width=2.0pt] (1) -- (2);
\draw[line width=2.0pt] (2) -- (3);
\draw[line width=2.0pt] (2) -- (5);
\draw[line width=2.0pt] (2) -- (7);
\draw[line width=2.0pt] (3) -- (4);
\draw[line width=2.0pt] (3) -- (6);
\draw[line width=2.0pt] (3) -- (8);
\draw (1) -- (9);
\draw (9) -- (10);
\draw (9) -- (11);
\draw (9) -- (12);

\end{tikzpicture}
\hspace{1cm}%
\begin{tikzpicture}[main/.style = {draw, circle}, node distance=1cm, minimum size=0.6cm]
\node[main,line width=1.5pt,fill=Goldenrod] (1) {l}; 
\node[main,line width=1.5pt,fill=Goldenrod] (2) [below of=1] {};
\node[main,line width=1.5pt,fill=red] (3) [left of=2] {r};
\node[main,line width=1.5pt,fill=Goldenrod] (4) [below of=3] {};
\node[main,fill=green] (5) [below of=2] {};
\node[main,line width=1.5pt,fill=Goldenrod] (6) [left of=3] {};
\node[main,fill=green] (7) [right of=2] {};
\node[main,line width=1.5pt,fill=Goldenrod] (8) [above of=3] {};
\node[main,line width=1.5pt] (9) [above of=1,fill=red] {v};
\node[main,line width=1.5pt] (10) [above of=9,fill=Goldenrod] {};
\node[main,line width=1.5pt] (11) [left of=9,fill=Goldenrod] {};
\node[main,line width=1.5pt] (12) [right of=9,fill=Goldenrod] {};

\draw (1) -- (2);
\draw[line width=2.0pt] (2) -- (3);
\draw (2) -- (5);
\draw (2) -- (7);
\draw[line width=2.0pt] (3) -- (4);
\draw[line width=2.0pt] (3) -- (6);
\draw[line width=2.0pt] (3) -- (8);
\draw[line width=2.0pt] (1) -- (9);
\draw[line width=2.0pt] (9) -- (10);
\draw[line width=2.0pt] (9) -- (11);
\draw[line width=2.0pt] (9) -- (12);

\end{tikzpicture}
\caption{A graph with a bushy forest (in bold) displaying the situation before (left) and after (right) removing high-magnitude vertices adjacent to large trees. \textcolor{red}{Red}: root vertices ($R$).
\textcolor{orange}{Orange}: internal vertices ($I$). \textcolor{Goldenrod}{Yellow}: leaves ($L$). \textcolor{lime}{Lime}: high-magnitude vertices. \textcolor{green}{Green}: other neighbors to the bushy forest ($N$). \textcolor{cyan}{Cyan}: other vertices ($U$).}
\label{fig:ex41}
\end{figure}
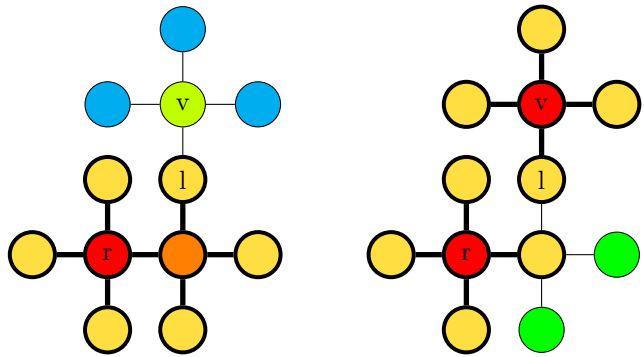

Consider a tree consisting of multiple internal vertices with an adjacent high-magnitude vertex $v$, as shown in Figure \ref{fig:ex41}. 
Let the leaf adjacent to $v$ be $l$ and let the root vertex $r$ of the tree be any internal vertex that is not adjacent to $l$ in the bushy forest.
Then, we can remove the original tree, and add two trees rooted at $r$ and $v$ respectively.
The tree rooted at $r$ will include four vertices from the original tree, while the tree rooted at $v$ will include its three neighbors outside the bushy forest and its neighboring leaf $l$.

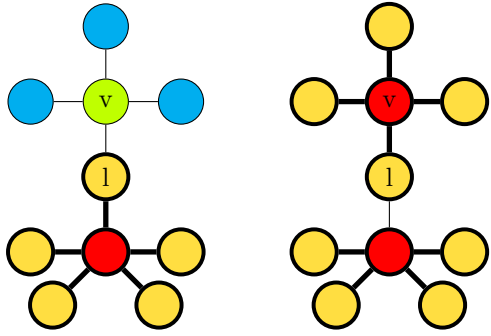
\begin{figure}[H]
\centering
\begin{tikzpicture}[main/.style = {draw, circle}, node distance=1cm, minimum size=0.6cm]
\node[main,line width=1.5pt,fill=Goldenrod] (1) {l}; 
\node[main,line width=1.5pt,fill=red] (2) [below of=1] {};
\node[main,line width=1.5pt,fill=Goldenrod] (3) [below right of=2] {};
\node[main,line width=1.5pt,fill=Goldenrod] (4) [below left of=2] {};
\node[main,line width=1.5pt,fill=Goldenrod] (5) [left of=2] {};
\node[main,line width=1.5pt,fill=Goldenrod] (7) [right of=2] {};
\node[main] (9) [above of=1,fill=lime] {v};
\node[main] (10) [above of=9,fill=cyan] {};
\node[main] (11) [left of=9,fill=cyan] {};
\node[main] (12) [right of=9,fill=cyan] {};

\draw[line width=2.0pt] (1) -- (2);
\draw[line width=2.0pt] (2) -- (3);
\draw[line width=2.0pt] (2) -- (5);
\draw[line width=2.0pt] (2) -- (7);
\draw[line width=2.0pt] (2) -- (4);
\draw (1) -- (9);
\draw (9) -- (10);
\draw (9) -- (11);
\draw (9) -- (12);

\end{tikzpicture}
\hspace{1cm}%
\begin{tikzpicture}[main/.style = {draw, circle}, node distance=1cm, minimum size=0.6cm]
\node[main,line width=1.5pt,fill=Goldenrod] (1) {l}; 
\node[main,line width=1.5pt,fill=red] (2) [below of=1] {};
\node[main,line width=1.5pt,fill=Goldenrod] (3) [below right of=2] {};
\node[main,line width=1.5pt,fill=Goldenrod] (4) [below left of=2] {};
\node[main,line width=1.5pt,fill=Goldenrod] (5) [left of=2] {};
\node[main,line width=1.5pt,fill=Goldenrod] (7) [right of=2] {};
\node[main,line width=1.5pt,fill=red] (9) [above of=1] {v};
\node[main,line width=1.5pt,fill=Goldenrod] (10) [above of=9] {};
\node[main,line width=1.5pt,fill=Goldenrod] (11) [left of=9] {};
\node[main,line width=1.5pt,fill=Goldenrod] (12) [right of=9] {};

\draw (1) -- (2);
\draw[line width=2.0pt] (2) -- (3);
\draw[line width=2.0pt] (2) -- (5);
\draw[line width=2.0pt] (2) -- (7);
\draw[line width=2.0pt] (2) -- (4);
\draw[line width=2.0pt] (1) -- (9);
\draw[line width=2.0pt] (9) -- (10);
\draw[line width=2.0pt] (9) -- (11);
\draw[line width=2.0pt] (9) -- (12);

\end{tikzpicture} 
\caption{A graph with a bushy forest (in bold) displaying the situation before (left) and after (right) removing high-magnitude vertices adjacent to trees with many leaves. \textcolor{red}{Red}: root vertices ($R$). \textcolor{Goldenrod}{Yellow}: leaves ($L$). \textcolor{lime}{Lime}: high-magnitude vertices. \textcolor{cyan}{Cyan}: other vertices ($U$).}
\label{fig:ex42}
\end{figure}

Secondly, consider a high-magnitude vertex $v$ adjacent to a tree with a single internal vertex and at least five leaves, like shown in Figure \ref{fig:ex42}. 
Let the leaf adjacent to $v$ be called $l$. 
Remove $l$ from the existing tree, which will still be a valid tree with four leaves. 
Then, add a new tree rooted at $v$ with four leaves:
the three neighbors originally outside the bushy forest and $l$.

\begin{figure}[H]
\centering
\begin{tikzpicture}[main/.style = {draw, circle}, node distance=1cm, minimum size=0.6cm]
\node[main,line width=1.5pt,fill=Goldenrod] (1) {}; 
\node[main,line width=1.5pt,fill=red] (2) [below of=1] {};
\node[main,line width=1.5pt,fill=Goldenrod] (3) [below of=2] {};
\node[main,line width=1.5pt,fill=Goldenrod] (4) [left of=2] {};
\node[main,line width=1.5pt,fill=Goldenrod] (5) [right of=2] {};
\node[main,fill=lime] (6) [below left of=3] {w};
\node[main,fill=cyan] (7) [below left of=6] {};
\node[main,fill=cyan] (8) [left of=6] {};
\node[main,fill=cyan] (9) [above left of=6] {};
\node[main,fill=lime] (10) [above left of=1] {v};
\node[main,fill=cyan] (11) [above left of=10] {};
\node[main,fill=cyan] (12) [left of=10] {};
\node[main,fill=cyan] (13) [below left of=10] {};

\draw[line width=2.0pt] (1) -- (2);
\draw[line width=2.0pt] (2) -- (3);
\draw[line width=2.0pt] (2) -- (4);
\draw[line width=2.0pt] (2) -- (5);
\draw (3) -- (6);
\draw (6) -- (7);
\draw (6) -- (8);
\draw (6) -- (9);
\draw (1) -- (10);
\draw (10) -- (11);
\draw (10) -- (12);
\draw (10) -- (13);

\end{tikzpicture}
\hspace{1cm}%
\begin{tikzpicture}[main/.style = {draw, circle}, node distance=1cm, minimum size=0.6cm]
\node[main,line width=1.5pt,fill=Goldenrod] (1) {}; 
\node[main,fill=green] (2) [below of=1] {};
\node[main,line width=1.5pt,fill=Goldenrod] (3) [below of=2] {};
\node[main,fill=cyan] (4) [left of=2] {};
\node[main,fill=cyan] (5) [right of=2] {};
\node[main,line width=1.5pt,fill=red] (6) [below left of=3] {w};
\node[main,line width=1.5pt,fill=Goldenrod] (7) [below left of=6] {};
\node[main,line width=1.5pt,fill=Goldenrod] (8) [left of=6] {};
\node[main,line width=1.5pt,fill=Goldenrod] (9) [above left of=6] {};
\node[main,line width=1.5pt,fill=red] (10) [above left of=1] {v};
\node[main,line width=1.5pt,fill=Goldenrod] (11) [above left of=10] {};
\node[main,line width=1.5pt,fill=Goldenrod] (12) [left of=10] {};
\node[main,line width=1.5pt,fill=Goldenrod] (13) [below left of=10] {};

\draw (1) -- (2);
\draw (2) -- (3);
\draw (2) -- (4);
\draw (2) -- (5);
\draw[line width=2.0pt] (3) -- (6);
\draw[line width=2.0pt] (6) -- (7);
\draw[line width=2.0pt] (6) -- (8);
\draw[line width=2.0pt] (6) -- (9);
\draw[line width=2.0pt] (1) -- (10);
\draw[line width=2.0pt] (10) -- (11);
\draw[line width=2.0pt] (10) -- (12);
\draw[line width=2.0pt] (10) -- (13);

\end{tikzpicture}
\caption{A graph with a bushy forest (in bold) displaying the situation before (left) and after (right) removing high-magnitude vertices adjacent to the same tree. \textcolor{red}{Red}: root vertices ($R$). \textcolor{Goldenrod}{Yellow}: leaves ($L$). \textcolor{lime}{Lime}: high-magnitude vertices. \textcolor{green}{Green}: other neighbors to the bushy forest ($N$). \textcolor{cyan}{Cyan}: other vertices ($U$).}
\label{fig:ex43}
\end{figure}
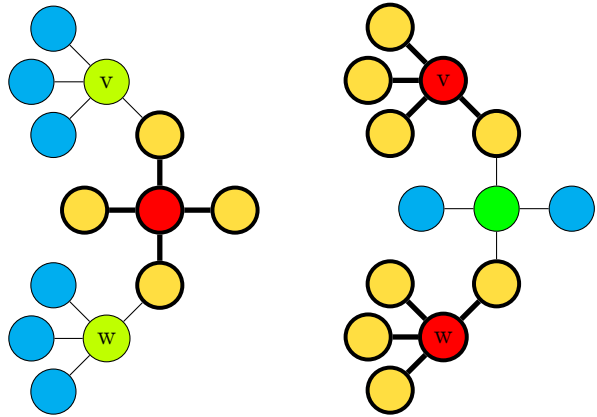

Next, consider a tree adjacent with two distinct leaves adjacent to high-magnitude vertices $v$ and $w$, where $v$ and $w$ are neither adjacent nor share a neighbor outside the bushy forest, as shown in Figure \ref{fig:ex43}.
Then, we can remove this tree, and instead root new trees at $v$ and $w$.
Both new trees will have four leaves: 
the three neighbors originally outside the bushy forest and the leaves to which they were adjacent. 
As $v$ and $w$ are adjacent to distinct leaves and do not share a neighbor outside the bushy forest, both will form valid trees.

\begin{figure}[H]
\centering
\begin{tikzpicture}[main/.style = {draw, circle}, node distance=1cm, minimum size=0.6cm]
\node[main,line width=1.5pt,fill=red] (1) {}; 
\node[main,line width=1.5pt,fill=Goldenrod] (2) [below of=1] {};
\node[main,line width=1.5pt,fill=Goldenrod] (3) [above of=1] {};
\node[main,line width=1.5pt,fill=Goldenrod] (4) [left of=1] {};
\node[main,line width=1.5pt,fill=Goldenrod] (5) [right of=1] {};
\node[main,fill=lime] (6) [above right of=3] {v};
\node[main,fill=lime] (7) [above right of=5] {w};
\node[main,fill=cyan] (8) [above right of=6] {};
\node[main,fill=cyan] (9) [above right of=7] {};
\node[main,fill=cyan] (10) [above left of=6] {};
\node[main,fill=cyan] (11) [below right of=7] {};

\draw[line width=2.0pt] (1) -- (2);
\draw[line width=2.0pt] (1) -- (3);
\draw[line width=2.0pt] (1) -- (4);
\draw[line width=2.0pt] (1) -- (5);
\draw (3) -- (6);
\draw (5) -- (7);
\draw (6) -- (7);
\draw (6) -- (8);
\draw (6) -- (10);
\draw (7) -- (9);
\draw (7) -- (11);

\end{tikzpicture}
\hspace{1cm}%
\begin{tikzpicture}[main/.style = {draw, circle}, node distance=1cm, minimum size=0.6cm]
\node[main,fill=green] (1) {}; 
\node[main,fill=cyan] (2) [below of=1] {};
\node[main,line width=1.5pt,fill=Goldenrod] (3) [above of=1] {};
\node[main,fill=cyan] (4) [left of=1] {};
\node[main,line width=1.5pt,fill=Goldenrod] (5) [right of=1] {};
\node[main,line width=1.5pt,fill=red] (6) [above right of=3] {v};
\node[main,line width=1.5pt,fill=red] (7) [above right of=5] {w};
\node[main,line width=1.5pt,fill=Goldenrod] (8) [above right of=6] {};
\node[main,line width=1.5pt,fill=Goldenrod] (9) [above right of=7] {};
\node[main,line width=1.5pt,fill=Goldenrod] (10) [above left of=6] {};
\node[main,line width=1.5pt,fill=Goldenrod] (11) [below right of=7] {};

\draw (1) -- (2);
\draw (1) -- (3);
\draw (1) -- (4);
\draw (1) -- (5);
\draw[line width=2.0pt] (3) -- (6);
\draw[line width=2.0pt] (5) -- (7);
\draw[line width=2.0pt] (6) -- (7);
\draw[line width=2.0pt] (6) -- (8);
\draw[line width=2.0pt] (6) -- (10);
\draw[line width=2.0pt] (7) -- (9);
\draw[line width=2.0pt] (7) -- (11);

\end{tikzpicture}
\caption{A graph with a bushy forest (in bold) displaying the situation before (left) and after (right) removing high-magnitude vertices adjacent to large trees. \textcolor{red}{Red}: root vertices ($R$). \textcolor{Goldenrod}{Yellow}: leaves ($L$). \textcolor{lime}{Lime}: high-magnitude vertices. \textcolor{green}{Green}: other neighbors to the bushy forest ($N$). \textcolor{cyan}{Cyan}: other vertices ($U$).}
\label{fig:ex44}
\end{figure}
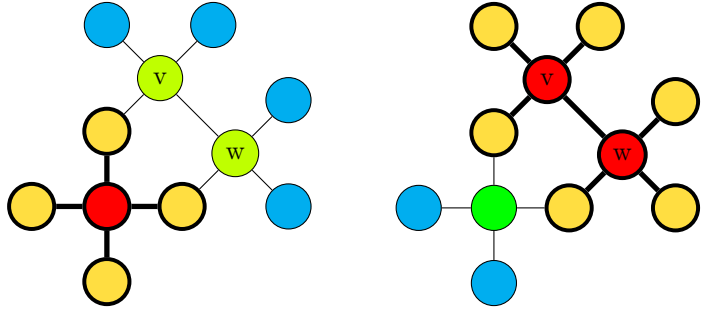

Finally, consider a tree with two distinct leaves adjacent to high-magnitude vertices $v$ and $w$, where $v$ and $w$ do not share a common neighbor outside the bushy forest, but are adjacent.
Then, we can remove this tree (which has a singular internal vertex) and instead create a new tree with both high-magnitude $v$ and $w$ as internal vertices, as shown in Figure \ref{fig:ex44}.

Thus, a high-magnitude vertex can only be adjacent to a tree in the maximal low-magnitude bushy forest with exactly one internal vertex and four leaves.
If there are multiple high-magnitude vertices adjacent to the same tree, they must either be adjacent to the same leaf of the tree or share a common neighbor outside the bushy forest.
\end{proof}

Now, any remaining high-magnitude vertices must be adjacent to a low-magnitude bushy forest, which constrains their appearance.
Furthermore, these high-magnitude vertices must share a common neighbor with any other high-magnitude vertices adjacent to the same tree.
As they have to share a common neighbor, this creates predictable structures within the graph when one tree is adjacent to many high-magnitude vertices.
In the next section, we explain how these structures also limit the number of vertices outside the bushy forest.


\section{A Bound on Difficult-To-Color Vertices}
\label{sec:analysis}
High-magnitude vertices only exist adjacent to certain trees, as described in the previous section. 
High-magnitude vertices must share a common neighbor if they are adjacent to the same tree.
Using these constraints, we find a set of equations expressing the relationship between the different types of vertices.
We will use these constraints to formulate a linear program that finds the worst-case graph for our algorithm.

We further partition the vertices outside the bushy forest, such that we can analyze the effects of high-magnitude vertices:
\begin{itemize}
\item $N_1$: vertices in $N$ of degree three, with one neighbor in $L$.
\item $N_2$: vertices in $N$ with multiple neighbors in $L$.
\item $N_{3,i}$: high-magnitude vertices (vertices in $N$ with three neighbors outside the maximal low-magnitude bushy forest), adjacent to a tree to which in total $i$ high-magnitude vertices are adjacent.
If a high-magnitude vertex is adjacent to multiple trees in the bushy forest, we assign them to the highest value $i$ of its adjacent trees.
We refer to these vertices collectively as $N_3$.
\end{itemize}

Secondly, we will partition the vertices in $U$ to express their relation to the new partition of $N$:
\begin{itemize}
    \item $U'$: vertices in $U$ of which all three neighbors must be in $N_3$.
    \item $U_j$: vertices in $U$ (and not in $U'$) that appear in a connected subgraph of degree-three vertices containing $j$ vertices in $N_1$.
\end{itemize}

All vertices in $U$ must have degree three, so they must all appear in exactly one subgraph of degree-three vertices.
Furthermore, the connected subgraphs of degree-three vertices can contain at most eight vertices.
As such, all vertices in $U$ will either be in $U'$ or some set $U_j$ for $0 \geq j < 8$.

Finally, we notice that the number of vertices in $N_2$ and $N_3$ determines the maximum amount of degree-three vertices outside the bushy forest ($N_1$ and all vertices in $U$).
We count the edges that can exist between vertices in $N_2$ or $N_3$ and degree-three vertices outside the bushy forest. We split this into two categories:
\begin{enumerate}
    \item Three edges per vertex in $U'$ to connect $N_3$ with $U'$.
    \item The set of edges $S$ between vertices in $N_2$ or $N_3$ and vertices $N_1$ or a set $U_j$ where $0 \leq j \leq 7$. 
\end{enumerate}

Now, we will show various constraints based on the relation between the partition of vertices. These relations will limit how many vertices in one partition can exist, based on the number of vertices in another partition:

\begin{lemma}
Let $G$ be a graph in which all vertex degrees are three or more, in which there is no cycle of degree-three vertices nor any connected subgraph of nine or more degree-three vertices. Let $F$ be a maximal low-magnitude bushy forest in $G$. Then,
\end{lemma}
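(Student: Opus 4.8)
The plan is to establish the displayed list of inequalities one at a time, grouping them by the kind of combinatorial argument each needs, since every one of them has the shape ``one partition cannot be too large relative to some others.'' Throughout I identify each symbol $R, I, L, N_1, N_2, N_{3,i}, U', U_j$ with the cardinality of the corresponding vertex set, and I keep in view three global facts supplied by the hypotheses: every vertex has degree at least three; every connected subgraph induced by degree-three vertices has at most eight vertices and contains no cycle, hence is a tree with at most $m-1$ internal edges on $m\le 8$ vertices; and $F$ is a maximal \emph{low-magnitude} bushy forest, so the structural restrictions on high-magnitude vertices proved in Section~\ref{sec:forest} are available.

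First I would derive the constraints coming purely from the bushy-forest axioms and from maximality. Each internal vertex has at least four neighbours inside its tree, so a per-tree handshake count gives that a tree with $k$ internal vertices has at least $2k+2$ leaves; summing over the $|R|$ trees yields the lower bound on $|L|$ in terms of $|I|$ and $|R|$. Maximality states that no vertex outside $F$ has four neighbours outside $F$, no leaf has three neighbours outside $F$, and no vertex outside $F$ is adjacent to an internal vertex; these translate directly into: every vertex of $N$ has between one and two neighbours in $L$, every vertex outside $F$ has at most three neighbours outside $F$, and all neighbours of a $U$-vertex lie outside $F$. Combining ``at most two leaf-neighbours per $N$-vertex'' with ``at most two $N$-neighbours per leaf'' produces the counting bounds tying $|N_1|$ and $|N_2|$ to $|L|$. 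I would also record the two bookkeeping identities $|N| = |N_1|+|N_2|+\sum_i |N_{3,i}|$ and $|U| = |U'| + \sum_{j=0}^{7} |U_j|$, which follow from the definitions once one checks, using the eight-vertex bound, that every $U$-vertex lands in exactly one class.

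Next I would treat the high-magnitude constraints. By the low-magnitude property, a high-magnitude vertex is adjacent only to trees with exactly one internal vertex and four leaves, and any two high-magnitude vertices adjacent to such a common tree share a neighbour that is either a leaf of that tree or a vertex outside $F$. This forces a rigid local gadget around any tree adjacent to several high-magnitude vertices: I can bound the $|N_{3,i}|$'s by the number of single-internal-vertex trees, and I can charge the edges emanating from $N_{3,i}$-vertices to the forced common neighbours (leaves of that tree, or vertices of $U'$). I expect these estimates to be the tightest, so I would handle small $i$ (say $i\le 2$) separately from larger $i$, exhibiting in each case explicitly which vertices of $L$, $N$, or $U'$ absorb the incident edges.

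Finally I would prove the edge-counting inequalities around the set $S$ of edges between $N_2\cup N_3$ and the degree-three vertices outside $F$ (that is, $N_1$ and the $U_j$'s), together with the separate $3|U'|$ edges joining $U'$ to $N_3$. On the supply side, each $N_2$-vertex has at most one non-leaf neighbour and each high-magnitude vertex has exactly three neighbours outside $F$, part of which the shared-neighbour structure has already committed; this upper-bounds $|S|$ and the $U'$-count in terms of $|N_2|$ and the $|N_{3,i}|$'s. On the demand side, a degree-three component that is a tree on $m\le 8$ vertices has exactly $3m-2(m-1)=m+2$ edges leaving it; for a vertex of $U$ every such leaving edge goes to $N_2\cup N_3$ (its neighbours are outside $F$ and are not degree-three vertices), whereas an $N_1$-vertex spends one of its edges on a leaf; this lower-bounds the contribution of each $U_j$ and of $N_1$ to $|S|$. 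Equating the two sides for $S$, and doing the analogous two-sided count for the edges into $U'$, yields the remaining inequalities. The main obstacle is precisely this bookkeeping: making the upper bound on how many $S$-edges a single $N_2$- or $N_{3,i}$-vertex can carry compatible with the lower bound forced by every degree-three component, uniformly over all component sizes $m$ and all values of $j$, and doing so without double-counting edges that the $U'$-count and the $S$-count both want to claim. Once every inequality is in hand, they constitute the feasible region of the linear program analysed later.
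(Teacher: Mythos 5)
Your overall strategy---double counting edges between the partition classes, using maximality of the forest and the tree structure of degree-three components---is the same as the paper's, and your treatment of the first, second and fourth constraints matches the paper's proofs in outline. The genuine gap is Constraint \ref{align:3}, which is the heart of the lemma: it pins down the exact coefficients $\frac{1}{5},\frac{2}{6},\frac{5}{7},1$ for $|N_{3,5}|,\dots,|N_{3,8}|$, and your paragraph on the high-magnitude constraints never derives them. You speak of ``charging edges to forced common neighbours'' and of treating ``small $i$ (say $i\le 2$)'' separately, but the constraint is vacuous for $i\le 4$ and only bites for $i\ge 5$. The missing argument is the pigeonhole step: a low-magnitude tree has one internal vertex and four leaves, each leaf has at most two neighbours outside $F$, so when five or more high-magnitude vertices are adjacent to the tree not all of them can realise their pairwise common neighbours through leaves; the remaining common neighbours must lie outside $F$, and since a high-magnitude vertex has only three neighbours outside $F$, some outside vertex is forced to be adjacent to three vertices of $N_3$ and hence to lie in $U'$---and one must count, case by case for $i=5,6,7,8$, how many such vertices are forced per tree (one, two, five, eight). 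You also omit the justification that this forced common neighbour may be assumed to be a $U$-vertex rather than an $N$-vertex (the paper notes that $N_1,N_2$ cannot have three $N_3$-neighbours and that an $N_3$-vertex of another tree playing this role only improves the runtime); without this, $U'$ as defined need not absorb the count.

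Two of your auxiliary counting claims are also wrong. ``Every vertex of $N$ has between one and two neighbours in $L$'' is not implied by maximality (nothing forbids an $N$-vertex with three leaf neighbours), and an upper bound per $N$-vertex is not what Constraint \ref{align:2} needs anyway: the correct ingredients are the lower bounds (at least one leaf neighbour for $N_1$ and $N_3$, at least two for $N_2$) set against at most two neighbours outside $F$ per leaf. More seriously, ``each $N_2$-vertex has at most one non-leaf neighbour'' is false: an $N_2$-vertex may have two neighbours outside $F$ (only three would make it high-magnitude and put it in $N_3$), and it is exactly this bound of two that produces the coefficient $2\cdot|N_2|$ in Constraint \ref{align:4}; with your bound the supply side would read $|N_2|$ and the resulting inequality would not hold for all graphs. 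Your demand-side count ($m+2$ outgoing edges from a degree-three tree component on $m$ vertices, minus one leaf edge per $N_1$-vertex in it) is the paper's, but the coefficients $\frac{10-j}{8-j}$ still have to be extracted by maximising the vertex-to-edge ratio over component sizes $m\le 8$, a step you flag as an obstacle rather than carry out.
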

\begin{align}
    4\cdot|R|+2\cdot|I|\leq|L| \label{align:1}\\
    |N_1|+2\cdot|N_2|+|N_3|\leq 2\cdot|L| \label{align:2}\\
    |U'|\geq\frac{1}{5}\cdot|N_{3,5}|+\frac{2}{6}\cdot|N_{3,6}|+\frac{5}{7}\cdot|N_{3,7}|+|N_{3,8}|\label{align:3}\\
    \sum_{j=0}^7 \left(\frac{10-j}{8-j}\cdot |U_j|\right)\leq 2\cdot |N_2|+  3\cdot\sum_{i=1}^8|N_{3,i}|-3\cdot|U'|\label{align:4}
\end{align}

\begin{proof}
The equations describe the relationship between the partitions of the vertices.
We will prove the equations one by one:
\[ 4\cdot|R|+2\cdot|I|\leq|L| \]
This follows the definition of the bushy forest: 
If a tree has a single internal vertex, it must have at least four leaves.
Every additional internal vertex causes the tree to have at least two more leaves:
we turn a leaf into an internal vertex.
This removes one of the leaves.
However, the internal vertex must now neighbor three leaves, as it only bordered the internal vertex as a leaf.
Thus, each additional internal vertex causes at least two leaves to appear in the bushy forest.

\[|N_1|+2\cdot|N_2|+|N_3|\leq 2\cdot|L|\]
This follows from the definitions $L$, $N_1$, $N_2$, and $N_3$:
vertices in $L$ have at most two neighbors outside the bushy forest.
Otherwise, the bushy forest would not be maximal.
Vertices in $N_1$ and $N_3$ have at least one neighbor in $L$, while vertices in $N_2$ must have at least two neighbors in $L$.

\[|U'|\geq\frac{1}{5}\cdot|N_{3,5}|+\frac{2}{6}\cdot|N_{3,6}|+\frac{5}{7}\cdot|N_{3,7}|+|N_{3,8}|\]
Recall the definition of $U'$: vertices in $U$ of which all neighbors must be high-magnitude vertices ($N_3$).
Vertices appear in $U'$ when there are five or more high-magnitude vertices adjacent to one tree of the bushy forest.
All high-magnitude vertices must share a common neighbor.
Then, some vertices in $U$ must be adjacent to three of the high-magnitude vertices if all must share a common neighbor.
We analyze how many vertices in $U'$ must exist for each tree in the bushy forest adjacent to five, six, seven, or eight high-magnitude vertices.

Consider a tree with five adjacent high-magnitude vertices (five vertices in $N_{3,5}$).
At most, two pairs of high-magnitude vertices have a common neighbor through a leaf of the tree.
So, at least one vertex does not share a common neighbor with any of the other four vertices through a leaf.
This vertex must share a common neighbor outside the bushy forest with all four other high-magnitude vertices.
However, it can only have three neighbors outside the bushy forest.
So, at least one of its neighbors must border two of the other four high-magnitude vertices: 
it has three neighbors in $N_3$ and must be in $U'$.
For every five vertices in $N_{3,5}$, there must exist at least one vertex in $U'$.

Vertices in $N_1$ and $N_2$ can never have three neighbors in $N_3$, so they cannot ever take the role of a vertex in $U'$.
Vertices in $N_3$ adjacent to a different tree can have three neighbors in $N_3$.
However, this has a strictly better runtime than a vertex in $U$ taking the same role:
it removes a possible vertex in $U$ from the instance.
Vertices in $U$ are colored through the chromatic forest, which takes more time than the overall algorithm.
As such, we can assume that it is always a vertex in $U$ that must neighbor three vertices in $N_3$.

Using this technique, we determine that there must be at least two vertices in $U'$ for every six vertices in $N_{3,6}$, five vertices in $U'$ for every seven vertices in $N_{3,7}$, and eight vertices in $U'$ for every eight vertices in $N_{3,8}$.

\[\sum_{j=0}^7 \left(\frac{10-j}{8-j}\cdot |U_j|\right)\leq 2\cdot |N_2|+  3\cdot|N_3|-3\cdot|U'|\]

Recall $S$: the set of edges between $N_2$ and $N_3$, and $N_1$ and $U$ (but not $U'$).
We will count the number of edges $S$ that may exist per vertex in $N_2$ and $N_3$, and the number of vertices in $U$ (but not $U'$) that may exist per edge in $S$ to prove this equation.

By definition, $N_2$ can have at most two neighbors outside the bushy forest, while $N_3$ will have exactly three.
However, we subtract all edges to vertices in $U'$.
Each vertex in $U'$ has three neighbors in $N_3$, so we subtract three edges that cannot go from $N_3$ to a different vertex in $U$ for each vertex in $U'$.
We get $|S| \leq 2\cdot |N_2|+  3\cdot|N_3|-3\cdot|U'|$.

Then, we count the number of vertices that may exist in $U-U'$ (in $U$, but not in $U'$) per edge in $S$. 
First, we will consider the connected components in $G[(U-U') \cup N_1$: 
these vertices all have degree three, and cannot contain cycles as per Lemma \ref{lem:3cycle}. 
Thus, they form a tree, which means that a connected component of $n$ vertices includes exactly $n-1$ edges.
Each edge has two endpoints within the component.
The sum of degrees of these vertices is $3\cdot n$, so the number of edges with one endpoint in this component and one outside this component must be $3\cdot n - 2\cdot (n-1)=n+2$. 
The ratio of vertices to outgoing edges is $n/(n+2)$, which is maximized when $n$ is maximal: $n=8$.
All outgoing edges having an endpoint in $U-U'$ must have the other endpoint in $N_2$ or $N_3$ by definition.
However, every vertex in $N_1$ will have exactly one neighbor in $L$.
So, for every vertex in $N_1$ in the component, we subtract one neighbor of the component that must be in $N_2$ or $N_3$.
Furthermore, we calculated how many vertices $U_j$ can exist:
we must also subtract the number of vertices in $N_1$ from this.
Then, we get $\sum_{j=0}^7 (\frac{10-j}{8-j}\cdot |U_j|) \leq |S|$.
If we combine these two results, we get $\sum_{j=0}^7 (\frac{10-j}{8-j}\cdot |U_j|)\leq 2\cdot |N_2|+  3\cdot|N_3|-3\cdot|U'|$.
\end{proof}

Next, we show that we can cover all high-magnitude vertices adjacent to a vertex in $U'$ in a maximal chromatic forest.
As we can color vertices in the chromatic forest more quickly than only using the \textsc{(3,2)-CSP} algorithm, this will improve the runtime of the algorithm.

\begin{definition}[Maximal High-Magnitude Chromatic Forest]
A \textit{maximal high-magnitude chromatic forest} is a maximal chromatic forest that covers all vertices in $U$, and all vertices in $N_3$ adjacent to a vertex in $U'$.
\end{definition}

\begin{lemma}
\label{lem:hmchromaticforest}
Let $G$ be a graph in which all vertex degrees are three or more, in which there is no cycle of degree-three vertices nor any connected subgraph of nine or more degree-three vertices. Let $F$ be a maximal low-magnitude bushy forest in $G$. Then, we can find a maximal high-magnitude chromatic forest in $G[V-F]$.
\end{lemma}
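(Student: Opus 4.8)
The plan is to start from Beigel and Eppstein's construction of a maximal chromatic forest of $G[V-F]$ covering $U$ and to enlarge it so that it also covers $W:=N_3\cap N(U')$; note that since every neighbor of a $U'$ vertex lies in $N_3$, the set $W$ is exactly $N(U')$. I would first record the elementary structure. A vertex $u\in U'$ has all three of its neighbors in $N_3\subseteq N$, so $U'$ is an independent set (an edge inside $U'$ would give a $U'$ vertex a neighbor in $U$), every neighbor of a $U'$ vertex lies in $W$, and every $v\in W$ has exactly three neighbors in $G[V-F]$. Hence $\{u\}\cup N(u)$ is already a $K_{1,3}$ for every $u\in U'$, and such a vertex is the natural root of a chromatic-forest tree, with its three $N_3$-neighbors as children.

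From this I would build the forest in three phases. In Phase~1, greedily choose a maximal family $\mathcal{U}^{*}\subseteq U'$ whose closed neighborhoods are pairwise disjoint, and root a $K_{1,3}$ tree at each $u\in\mathcal{U}^{*}$ with its three $N_3$-neighbors as children; by independence of $U'$ these trees are vertex-disjoint, and they cover $\mathcal{U}^{*}$ together with all of their $W$-neighbors. In Phase~2, I would cover every remaining vertex of $W$, processing them in a suitable order: for such a vertex $v$, if all three of its neighbors in $G[V-F]$ are still unused make $v$ the root of a fresh $K_{1,3}$; otherwise attach $v$ as a grandchild of an already-built tree through a neighbor of $v$ that is a child of that tree. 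In Phase~3, I would run Beigel and Eppstein's own procedure --- greedily extend to a maximal $K_{1,3}$-forest, apply their ``remove one tree, add two'' exchange, and assign the still-uncovered vertices of $U$ as grandchildren so that no tree acquires a sixth grandchild --- starting from the forest already built and leaving it intact; it is routine to adapt their argument to show that this still yields a chromatic forest covering all of $U$, and since the extra vertices of $W$ were covered in Phases~1--2, the result is a maximal high-magnitude chromatic forest.

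The real content, and the main obstacle, is verifying that the constructed object is a chromatic forest at all: every root must have exactly three children, every child at most two children, and every tree at most five grandchildren --- the last being the binding constraint, since the three children of a Phase-1 tree could otherwise absorb six grandchildren. So the heart of the proof is the Phase-2 routing: one must show that the leftover vertices of $W$ always fit into grandchild slots or fresh roots without overflowing any tree. I would do this with a counting-and-local-exchange argument --- bounding, in terms of the vertex degrees, how many leftover $W$-vertices can be forced onto a single Phase-1 tree, and rerouting any surplus onto a neighboring tree or a new root, possibly after refining the choice of $\mathcal{U}^{*}$. The quantities involved are controlled by equation~\eqref{align:3}, which quantifies how many $U'$ vertices must appear when five to eight high-magnitude vertices are adjacent to one bushy-forest tree, and by the low-magnitude property, which forces any two high-magnitude vertices adjacent to the same tree to share a common neighbor among that tree's four leaves or outside $F$; these facts keep the worst case small and rigid enough that I expect a short case analysis over the number ($5$, $6$, $7$, or $8$) of high-magnitude vertices adjacent to a single tree --- exhibiting in each case an explicit within-budget sub-forest covering that tree's high-magnitude neighbors, and then an argument that such local sub-forests can be chosen mutually disjoint even when a high-magnitude vertex is adjacent to two trees (using that it is assigned to the tree of larger index~$i$) --- to finish the proof.
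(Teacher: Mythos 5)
Your plan correctly identifies where the difficulty lies (the five-grandchild budget), but it leaves precisely that part unproven, and the expectation on which it rests is false. You assume that a short case analysis over $5,6,7,8$ high-magnitude neighbors of a bushy-forest tree will always let you ``exhibit an explicit within-budget sub-forest,'' i.e.\ that some rerouting or exchange always restores the weight-at-most-five condition. The paper's own analysis shows this is not so: after the standard Beigel--Eppstein exchanges one can be left with a $K_{1,3}$ tree whose six potential grandchildren have no neighbors outside the structure and form a single $6$-cycle (two $3$-cycles would allow an exchange, and an outside neighbor of a grandchild allows an exchange, so this case is irreducible). Since a $6$-cycle of $U$-vertices would be a forbidden cycle of degree-three vertices, this configuration necessarily contains high-magnitude vertices and $U'$ vertices, and in its final form it is a closed gadget attached to the rest of the graph only through three leaves of one bushy-forest tree. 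No choice of roots or grandchild assignment brings it within budget; the paper escapes not by a forest argument at all, but by observing that those three leaves share a root neighbor, hence are $2$-colorable, and that the gadget is then trivially $3$-colorable for every coloring of the leaves, so it can be disposed of directly. Your proposal has no counterpart to this escape hatch, and without it the Phase-2/Phase-3 routing argument cannot be completed.

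A second, smaller gap: you freeze the Phase-1 trees rooted at $U'$ vertices and assert it is ``routine'' to run Beigel--Eppstein's maximality and exchange procedure around them. Their argument (and the paper's refinement of it) works with fractional weights $1/i$ for vertices adjacent to $i$ trees and repeatedly \emph{replaces} existing $K_{1,3}$ trees, including ones you would want to keep; fixing a subfamily of trees in advance is exactly the kind of constraint that can block the exchange steps, so this compatibility needs an argument, not an appeal to routine adaptation. The paper avoids the issue by not privileging $U'$-rooted trees: it runs the weight argument on an arbitrary maximal $K_{1,3}$ forest with the enlarged set of potential grandchildren ($U$ plus the high-magnitude neighbors of $U'$) and only then analyzes the overloaded trees, which is where the irreducible gadget above appears.
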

\begin{proof}
Beigel and Eppstein showed that we can find a chromatic forest in $G[V-F]$.
To find a chromatic forest, they first found a maximal forest of $K_{1,3}$ trees in $G[V-F]$: 
trees with a root vertex and three children.
While possible, they modify the forest of $K_{1,3}$ trees by removing one $K_{1,3}$ tree and adding at least two. 
Using this forest of $K_{1,3}$ trees, they showed that every vertex in $U$ can be assigned to one of the $K_{1,3}$ trees, such that no tree is assigned more than five grandchildren \cite[Lemma~25]{Beigel2005}.

We will now show that we can also include all high-magnitude vertices that are adjacent to a vertex in $U'$ in the chromatic forest. 
This will turn it from a maximal chromatic forest to a maximal high-magnitude chromatic forest.

All high-magnitude vertices or vertices in $U$ will have three neighbors outside the bushy forest. 
As such, if they are not included in the forest of $K_{1,3}$ trees, they are adjacent to at least one and at most three of these trees. 
They cannot be adjacent to no trees, as then we could add a new $K_{1,3}$ rooted at this vertex. 
Suppose a vertex $v$ is adjacent to $i$ $K_{1,3}$ trees, we give vertex $v$ weight $1/i$.
Beigel and Eppstein showed that we can find a maximal chromatic forest if the weight of all potential grandchildren does not exceed five for any tree.

Consider an arbitrary $K_{1,3}$ tree $T$ whose potential grandchildren have a weight over five.
For each of the three children, there can exist at most two grandchildren: at most six in total.
As each grandchild has weight at most one, there must be six potential grandchildren. Furthermore, at least five have unit weight and thus be adjacent to only one $K_{1,3}$ tree.

Suppose there exists a possible grandchild $v$ that neighbors a vertex that is not a possible grandchild of $T$.
Depending on the weight of $v$, we can always replace $T$ by two $K_{1,3}$ trees:
\begin{itemize}
    \item $v$ has a weight of one.
    We remove $T$ and create a new $K_{1,3}$ rooted at $v$.
    As $v$ neighbors at most one other grandchild of $T$, there must be a child vertex of $T$ whose grandchildren do not border $v$.
    We also create a new tree rooted at this child vertex.
    \item $v$ has a weight greater than one.
    Let $v'$ be the grandchild of $T$ that neighbors the same child of $T$ as $v$.
    We remove $T$ and create a new $K_{1,3}$ tree rooted at the parent of $v$.
    $v$ neighbors at most one other grandchild of $T$, while $v'$ neighbors at most two.
    $T$ had six potential grandchildren, so there must be one that does not border $v$ or $v'$:
    we can root a new $K_{1,3}$ tree at this vertex.
\end{itemize}

Let this neighbor of $v$ not be part of a $K_{1,3}$ tree. Then, we could remove $T$ and create two new trees: one rooted at $v$, and one at the child vertex of $T$ that does neighbor $v$.
Let $v$ instead border two $K_{1,3}$ trees.
In this case, we can also create two new trees.
The other grandchildren cannot be adjacent to another $K_{1,3}$ tree. So, we remove the original tree, and root one at the parent of $v$, and one at one of the grandchildren that are not adjacent to the parent of $v$. The only possible remaining configuration of vertices is shown in Figure \ref{fig:6ex2}

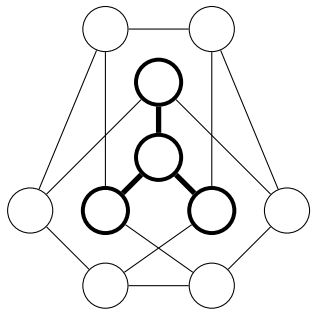
\begin{figure}[H]
\centering
\begin{tikzpicture}[main/.style = {draw, circle}, minimum size=0.6cm]
\node[main, line width=1.5pt] (1) {}; 
\node[main, line width=1.5pt] (2) [above of=1] {};
\node[main, line width=1.5pt] (3) [below left of=1] {};
\node[main, line width=1.5pt] (4) [below right of=1] {};
\node[main] (5) [above left of=2] {};
\node[main] (6) [above right of=2] {};
\node[main] (7) [left of=3] {};
\node[main] (8) [below of=3] {};
\node[main] (9) [right of=4] {};
\node[main] (10) [below of=4] {};

\draw[line width=2.0pt] (1) -- (2);
\draw[line width=2.0pt] (1) -- (3);
\draw[line width=2.0pt] (1) -- (4);
\draw[] (2) -- (7);
\draw[] (2) -- (9);
\draw[] (3) -- (5);
\draw[] (3) -- (10);
\draw[] (4) -- (6);
\draw[] (4) -- (8);
\draw[] (5) -- (6);
\draw[] (6) -- (9);
\draw[] (9) -- (10);
\draw[] (10) -- (8);
\draw[] (8) -- (7);
\draw[] (5) -- (7);
\end{tikzpicture}
\caption{The only possible configuration where a $K_{1,3}$ tree (bold) has grandchildren of weight larger than five within a maximal high-magnitude chromatic forest.}
\label{fig:6ex2}
\end{figure}

To find a $K_{1,3}$ tree whose potential grandchildren have weight over six, all six potential grandchildren must not have neighbors outside the $K_{1,3}$ tree.
All grandchildren must have three neighbors, one of which being a child vertex of $T$.
Each grandchild of $T$ must border two other grandchildren of $T$.
There must either be two cycles of length three through the grandchildren or one cycle of length six.
With two cycles, we could root two new trees, each rooted at one of the grandchildren.
So, there must exist a cycle of length six through the six grandchildren.

If all potential grandchildren are vertices in $U$, this would create a cycle of degree-three vertices. As such, in this situation, there must exist a high-magnitude vertex as a possible grandchild
We only consider high-magnitude vertices adjacent to vertices in $U'$ as potential grandchildren, so there must also exist at least one vertex in $U'$.
As a vertex in $U'$ is adjacent to three high-magnitude vertices, and neither the vertices in the $K_{1,3}$ tree nor the potential grandchildren can have neighbors outside this structure, there must be at least three high-magnitude vertices in the $K_{1,3}$ tree or its potential grandchildren.

Presume no further high-magnitude vertices or vertices in $U'$ are included.
Then, we can remove the current $K_{1,3}$ tree and create a new one rooted at the vertex in $U'$ instead, with the three high-magnitude vertices as its children.
Then, the tree will have fewer than six potential grandchildren, as only vertices in $U$ and high-magnitude vertices adjacent to vertices in $U'$ as potential grandchildren.
Otherwise, all grandchildren would be vertices in $U$.
In this case, the vertices in $U$ would form a cycle, which is not possible.

Consider the situation where the tree and its potential grandchildren contain more high-magnitude vertices adjacent to a vertex in $U'$.
Then, there must exist another vertex in $U'$. The high-magnitude vertices adjacent to this vertex in $U'$ cannot be adjacent to another tree in the bushy forest: 
there cannot exist two disjoint $K_{1,3}$ trees in the $K_{1,3}$ tree and its potential grandchildren.
Otherwise, we could replace the $K_{1,3}$ tree with two such trees. 
As such, the two vertices in $U'$ must share a common neighbor in $N_3$.

\begin{figure}[H]
\centering
\begin{tikzpicture}[main/.style = {draw, circle}, minimum size=0.6cm]
\node[main,fill=red] (1) {U}; 
\node[main,fill=green] (2) [above of=1] {N};
\node[main,fill=green] (3) [below left of=1] {N};
\node[main,fill=green] (4) [below right of=1] {N};
\node[main,fill=red] (5) [above left of=2] {U};
\node[main,fill=cyan] (6) [above right of=2] {N};
\node[main,fill=cyan] (7) [left of=3] {N};
\node[main,fill=red] (8) [below of=3] {U};
\node[main,fill=red] (9) [right of=4] {U};
\node[main,fill=cyan] (10) [below of=4] {N};
\node[main,fill=green] (11) [below left of=10] {L};
\node[main,fill=green] (12) [below right of=9] {L};
\node[main,fill=green] (13) [ left of=7] {L};
\node[main,fill=red] (14) [below of=11] {R};

\draw[] (1) -- (2);
\draw[] (1) -- (3);
\draw[] (1) -- (4);
\draw[] (2) -- (7);
\draw[] (2) -- (9);
\draw[] (3) -- (5);
\draw[] (3) -- (10);
\draw[] (4) -- (6);
\draw[] (4) -- (8);
\draw[] (5) -- (6);
\draw[] (6) -- (9);
\draw[] (9) -- (10);
\draw[] (10) -- (8);
\draw[] (8) -- (7);
\draw[] (5) -- (7);
\draw[] (11) -- (10);
\draw[] (11) -- (3);
\draw[] (12) -- (6);
\draw[] (12) -- (4);
\draw[] (13) -- (7);
\draw[] (13) -- (2);
\draw[] (14) -- (11);
\draw[] (14) -- (12);
\draw[] (14) -- (13);
\end{tikzpicture} 
\hspace{1cm}%
\begin{tikzpicture}[main/.style = {draw, circle}, minimum size=0.6cm]
\node[main,fill=red] (1) {U}; 
\node[main,fill=green] (2) [above of=1] {N};
\node[main,fill=cyan] (3) [below left of=1] {N};
\node[main,fill=green] (4) [below right of=1] {N};
\node[main,fill=red] (5) [above left of=2] {U};
\node[main,fill=cyan] (6) [above right of=2] {N};
\node[main,fill=cyan] (7) [left of=3] {N};
\node[main,fill=red] (8) [below of=3] {U};
\node[main,fill=red] (9) [right of=4] {U};
\node[main,fill=green] (10) [below of=4] {N};
\node[main,fill=cyan] (11) [below left of=10] {L};
\node[main,fill=green] (12) [below right of=9] {L};
\node[main,fill=green] (13) [ left of=7] {L};
\node[main,fill=red] (14) [below of=11] {R};

\draw[] (1) -- (2);
\draw[] (1) -- (3);
\draw[] (1) -- (4);
\draw[] (2) -- (7);
\draw[] (2) -- (9);
\draw[] (3) -- (5);
\draw[] (3) -- (10);
\draw[] (4) -- (6);
\draw[] (4) -- (8);
\draw[] (5) -- (6);
\draw[] (6) -- (9);
\draw[] (9) -- (10);
\draw[] (10) -- (8);
\draw[] (8) -- (7);
\draw[] (5) -- (7);
\draw[] (11) -- (10);
\draw[] (11) -- (3);
\draw[] (12) -- (6);
\draw[] (12) -- (4);
\draw[] (13) -- (7);
\draw[] (13) -- (2);
\draw[] (14) -- (11);
\draw[] (14) -- (12);
\draw[] (14) -- (13);
\end{tikzpicture} 
\caption{A valid 3-coloring for the only configuration possible where a tree in a maximal high-magnitude chromatic forest has a weight greater than five.}
\label{fig:6ex}
\end{figure}
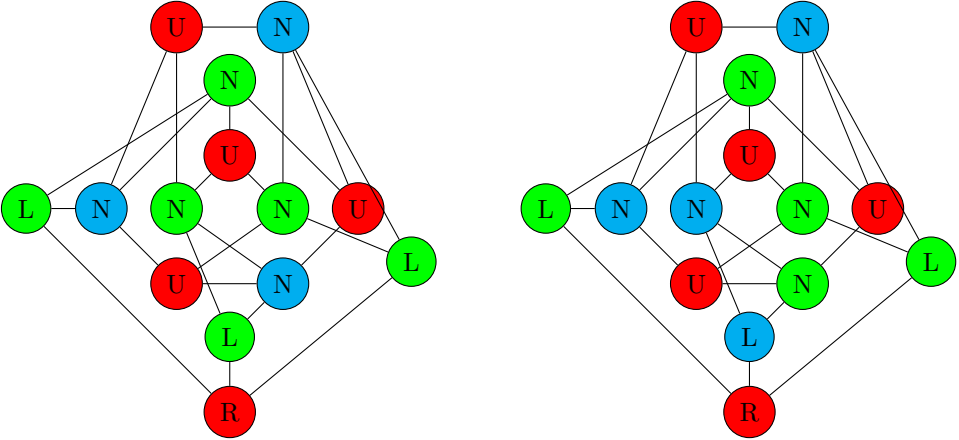

Now, the $K_{1,3}$ tree and its potential grandchildren must include multiple vertices in $U'$.
Then, the tree in the bushy forest they are adjacent to must neighbor at least six high-magnitude vertices.
The only possible configuration is shown in Figure \ref{fig:6ex}.

In this configuration, there must be three pairs of high-magnitude vertices that neighbor the same leaf of the tree in the bushy forest.
Then, the $K_{1,3}$ tree and its six grandchildren are only attached to the rest of the graph through these three leaves.
These three leaves share a neighbor in a single root vertex, so they must be 2-colorable.
Regardless of their coloration, it is always possible to color this configuration of vertices.
The three leaves either have the same color, or two have the same color and the third has a different color.
Figure \ref{fig:6ex} shows an exemplary coloring for both of these cases.

We now know that if the weight of the grandchildren of a $K_{1,3}$ tree is larger than five, we can either replace the tree and reduce the weight of the grandchildren, or the tree is trivial to color.
As such, we can conclude that we can find a maximal high-magnitude chromatic forest. 
The maximal high-magnitude chromatic forest allows us to color vertices in $N_3$ more easily for each vertex in $U'$.
Specifically, for every five vertices in $N_{3,5}$, there must be one vertex in $U'$ adjacent to three of the vertices in $N_{3,5}$.
For the vertices in $N_{3,j}$ for $6 \leq j \leq 8$, they will all be adjacent to a vertex in $U'$.
\end{proof}

We found constraints between the partitions of the vertices: these limit the number of vertices in $N$ and $U$, based on their properties and the number of vertices in $L$.
Furthermore, some high-magnitude vertices can always be colored by using the chromatic forest.
We now create a linear program to decide the worst-case graph for our algorithm.

\section{Analyzing our Algorithm}
\label{sec:lp}
We improve Beigel and Eppstein's algorithm by finding a maximal low-magnitude bushy forest and a maximal high-magnitude chromatic forest.
We enumerate all possible color assignments for selected vertices in these forests.

\lstset{
  breaklines=true,
  postbreak=\mbox{\textcolor{red}{$\hookrightarrow$}\space},
}
\begin{lstlisting}
Remove vertices of degree one and degree two.
Remove cycles of degree-three vertices and connected subgraphs of at least nine degree-three vertices.
Find a maximal low-magnitude bushy forest.
Find a maximal high-magnitude chromatic forest.
For each color assignment for the internal vertices of the bushy forest and selected vertices of the chromatic forest:
    Color all vertices according to the color assignment.
    Use the (3,2)-CSP algorithm on the remaining vertices.
\end{lstlisting}

We will now prove Theorem \ref{the:1}.
We know constraints on the partitions with different relations to the bushy forest from Section \ref{sec:analysis}.
We use these constraints, along with the time it takes to color each vertex of the graph, to formulate a linear program.
This linear program will find the worst-case graph for our algorithm and determine that it runs in time $\mathcal{O}^*(1.3217^n)$.

\subsection{Linear Program}
We found a set of constraints (Constraints \ref{align:1}, \ref{align:2}, \ref{align:3}, and \ref{align:4}) between the partitions of the vertices.
We use these constraints to create a linear program, which maximizes the runtime of the algorithm.

First, we will discuss the maximization function of the linear program and how it helps us find to discover the worst-case runtime of our algorithm.
Then, we discuss further constraints that are required for the linear program to function.
Finally, we will discuss how this linear program leads to our final result:
solving \textsc{3-coloring} in time $\mathcal{O}^*(1.3217^n)$.

\subsection{Creating the Linear Program}

\begin{align}
\max \quad
& \log(3)\cdot |R|+\log(2)\cdot |I|+\log(1.36443)\cdot |N^*|+\log(1.34004)\cdot|U^*| \label{ilp:obj}
\end{align}

\begin{align}
|N|-\frac{3}{5}\cdot|N_{3,5}|-|N_{3,6}|-|N_{3,7}|-|N_{3,8}| = |N^*| \label{ilp:7}\\ \quad
|U|+\frac{3}{5}\cdot|N_{3,5}|+|N_{3,6}| = |U^*| \label{ilp:8}
\end{align}

Recall that we iterate over all $3^{|R|}\cdot2^{|I|}$ color assignments for the internal vertices in the maximal low-magnitude bushy forest.
This allows us to remove the leaves ($|L|$) in polynomial time, so they do not increase the exponential runtime.
For the remaining vertices, all vertices in $U$ will be covered by the maximal high-magnitude chromatic forest, along with every high-magnitude vertex adjacent to a vertex in $U'$.
We define the set $U^*$ as all vertices that will be covered by the maximal high-magnitude chromatic forest and thus be colored in time $\mathcal{O}^*(1.34004^{|U^*|}$:
all vertices in $U$, along with all vertices in $N_3$ adjacent to vertices in $U'$ (Lemma \ref{lem:hmchromaticforest}).
The remaining vertices will be colored by the \textsc{(3,2)-CSP} algorithm.
Let $N^*$ be the set of vertices that are neither covered by the maximal low-magnitude bushy forest nor the maximal high-magnitude chromatic forest and will simply be colored using the \textsc{(3,2)-CSP} algorithm.
Indeed, $N^*$ equals all vertices in $N$, excluding high-magnitude vertices adjacent to vertices in $U$ (Lemma \ref{lem:hmchromaticforest}).

Notice that $3^{|R|}\cdot2^{|I|}\cdot1.36443^{|N^*|}\cdot1.34004^{|U^*|}$ calculates the runtime of the \textsc{3-coloring} algorithm, based on the partitions of the graph.
However, this is not a linear function.
Luckily, taking the logarithm of this function leads to a linear function: $\log(3^{|R|}\cdot2^{|I|}\cdot1.36443^{|N^*|}\cdot1.34004^{|U^*|})$.
Hence, this function determine the parameters $|R|$, $|I|$, $|N^*|$, and $|U^*|$ that maximize the runtime.

\begin{align}
|R|+|I|+|L|+|N|+|U|=n \label{ilp:1} \\ \quad
    \sum_{k=1}^3|N_k|=|N| \label{ilp:9}\\ \quad
\sum_{i=1}^8|N_{3,i}|=|N_3| \label{ilp:10}\\ \quad
\sum_{j=0}^7|U_j|+|U'|=|U| \label{ilp:11}\\ \quad
\sum_{i=1}^8(\frac{8}{i}\cdot|N_{3,i}|) \leq 8\cdot |R| \label{ilp:4}
\end{align}
Constraints \ref{align:1}, \ref{align:2}, \ref{align:3}, and \ref{align:4} function as constraints in the linear program. Constraint \ref{ilp:1} ensures that the partitions of the vertex set sum up to an arbitrary number of vertices $n$. 
We add further constraints (Constraints \ref{ilp:9}, \ref{ilp:10}, \ref{ilp:11}) that ensure that other partitions sum up to their parent set.
Finally, we add Constraint \ref{ilp:4}: 
by definition, each vertex in $N_{3,i}$ is adjacent to a tree in the maximal bushy forest that is adjacent to $i$ high-magnitude vertices in total.

\subsection{Results of the Linear Program}

Table \ref{tab:results} shows the results of the linear program.
In the results, sets of vertices are calculated as their fraction of the total number of vertices.
Notice that most sets of vertices are zero: they do not appear in the worst-case scenario. 
The vertices in $N$ (the neighbors of the maximal low-magnitude bushy forest) that do exist are in either $N_{3,6}$ or $N_{2}$.
None of the vertices in $N$ are in $N_1$, so all vertices in $U$ are either in $U_0$ or in $U'$. 
The vertices in $U_0$ exist in connected subgraphs of degree-three vertices of eight vertices in $U_0$: the largest connected subgraph of degree-three vertices that can exist.

\begin{table}[H]
\begin{tabular}{|ll|ll|ll|ll|}
\hline
$\mathbf{|R|}$  & 0.0396825 & $\mathbf{|E|}$   & 0.5555556 & $\mathbf{|N_{3,4}|}$ & 0         & $\mathbf{|U_2|}$ & 0         \\ \hline
$\mathbf{|I|}$  & 0         & $\mathbf{|N_1|}$  & 0         & $\mathbf{|N_{3,5}|}$ & 0         & $\mathbf{|U_3|}$ & 0         \\ \hline
$\mathbf{|L|}$  & 0.1587302 & $\mathbf{|N_2|}$  & 0.0396825 & $\mathbf{|N_{3,6}|}$ & 0.2380952 & $\mathbf{|U_4|}$ & 0         \\ \hline
$\mathbf{|N|}$  & 0.2777778 & $\mathbf{|N_3|}$  & 0.2380952 & $\mathbf{|N_{3,7}|}$ & 0         & $\mathbf{|U_5|}$ & 0         \\ \hline
$\mathbf{|U|}$  & 0.5238095 & $\mathbf{|N_{3,1}|}$ & 0         & $\mathbf{|N_{3,8}|}$ & 0         & $\mathbf{|U_6|}$ & 0         \\ \hline
$\mathbf{|N^*|}$ & 0.0396825 & $\mathbf{|N_{3,2}|}$ & 0         & $\mathbf{|U_0|}$  & 0.4444444 & $\mathbf{|U_7|}$ & 0         \\ \hline
$\mathbf{|U^*|}$ & 0.7619048 & $\mathbf{|N_{3,3}|}$ & 0         & $\mathbf{|U_1|}$  & 0         & $\mathbf{|U'|}$ & 0.0793651 \\ \hline
\end{tabular}
\caption{Results of the linear program.}
\label{tab:results}
\end{table}

The worst-case graph consists of many trees, which are all the same. 
Each tree is surrounded by six high-magnitude vertices. 
The remaining neighbors of the tree are vertices in $N_2$.
Every tree will cause two vertices in $U'$ to exist, with all other vertices not adjacent to a tree being in $U_0$.
One such tree is displayed in Figure \ref{fig:worstcase}.

Specifically, for every vertex in $R$, we have the following number of vertices in other sets:
\begin{enumerate}
    \item $L$: 4
    \item $N_2$: 1
    \item $N_{3,6}$: 6
    \item $U_0$: 11.2
    \item $U'$: 2
\end{enumerate}

By using the frequency of these sets of vertices, we determine that the worst-case runtime of our algorithm is $\mathcal{O}^*((3^1\cdot1.36443^1\cdot1.34004^{19.2})^{n/25.2})=\mathcal{O}^*(1.3217^n)$.

\section{Conclusion}
\label{sec:conclusion}
In this paper, we presented an improved algorithm for \textsc{3-coloring}. 
The new algorithm performs the following steps:

Notably, we introduced the concept of the high-magnitude vertex. 
In Beigel and Eppstein's algorithm, the existence of high-magnitude vertices allowed for many vertices to be outside the maximal bushy forest. 
Instead, we created the maximal low-magnitude bushy forest: a maximal bushy forest with few high-magnitude vertices.
Additionally, we also created the maximal high-magnitude chromatic forest.
Using these forests, we determine a set of vertices for which we iterate over all possible color assignments.
All vertices not included in this set, nor neighboring this set, will be solved by the \textsc{(3,2)-CSP} algorithm.

We performed a sophisticated analysis of the different properties of the vertices in the graph, defined by their relation to the bushy forest.
This analysis, formulated as a linear program, allowed us to generate the graph for which the runtime of the algorithm would be the largest: time $\mathcal{O}^*(1.3217^n)$.
This analysis helped us find the following structure, a single tree in the family of worst-case graphs:


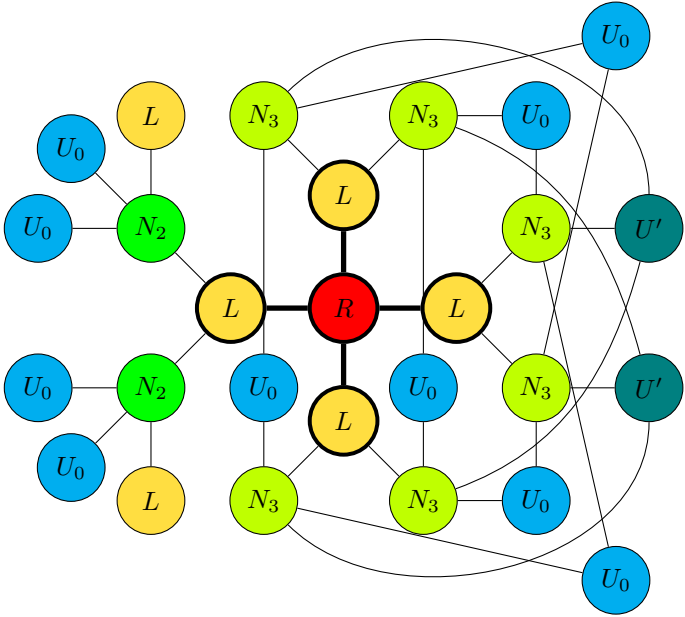
\begin{figure}[H]
\centering
\begin{tikzpicture}[main/.style = {draw, circle},node distance=1.5cm, minimum size=0.9cm]
\node[main] (1) [fill=red, line width=1.5pt] {$R$}; 
\node[main] (2) [above of=1,fill=Goldenrod, line width=1.5pt] {$L$};
\node[main] (3) [right of=1,fill=,fill=Goldenrod, line width=1.5pt] {$L$}; 
\node[main] (4) [below of=1,fill=Goldenrod, line width=1.5pt] {$L$};
\node[main] (5) [left of=1,fill=Goldenrod, line width=1.5pt] {$L$}; 
\node[main] (6) [above left of=2,fill=lime] {$N_3$}; 
\node[main] (7) [above right of=2,fill=lime] {$N_3$}; 
\node[main] (8) [above right of=3,fill=lime] {$N_3$}; 
\node[main] (9) [below right of=3,fill=lime] {$N_3$}; 
\node[main] (10) [below right of=4,fill=lime] {$N_3$}; 
\node[main] (11) [below left of=4,fill=lime] {$N_3$}; 
\node[main] (12) [below left of=5,fill=green] {$N_2$};
\node[main] (13) [above left of=5,fill=green] {$N_2$};
\node[main] (14) [right of=7,fill=cyan] {$U_0$};
\node[main] (15) [above right of=14,fill=cyan] {$U_0$};
\node[main] (16) [right of=10,fill=cyan] {$U_0$};
\node[main] (17) [below right of=16,fill=cyan] {$U_0$};
\node[main] (18) [right of=12,fill=cyan] {$U_0$};
\node[main] (19) [left of=9,fill=cyan] {$U_0$};
\node[main] (20) [below left of=12,fill=cyan] {$U_0$};
\node[main] (21) [left of=12,fill=cyan] {$U_0$};
\node[main] (22) [left of=13,fill=cyan] {$U_0$};
\node[main] (23) [above left of=13,fill=cyan] {$U_0$};
\node[main] (24) [right of=8,fill=teal] {$U'$};
\node[main] (25) [right of=9,fill=teal] {$U'$};
\node[main] (26) [below of=12,fill=Goldenrod] {$L$};
\node[main] (27) [above of=13,fill=Goldenrod] {$L$};

\draw[line width=2.0pt] (1) -- (2);
\draw[line width=2.0pt] (1) -- (3);
\draw[line width=2.0pt] (1) -- (4);
\draw[line width=2.0pt] (1) -- (5);
\draw (2) -- (6);
\draw (2) -- (7);
\draw (3) -- (8);
\draw (3) -- (9);
\draw (4) -- (10);
\draw (4) -- (11);
\draw (5) -- (12);
\draw (5) -- (13);

\draw (6) -- (15);
\draw (6) to [out=45,in=90,looseness=1] (24);
\draw (6) -- (18);
\draw (7) -- (14);
\draw (7) to [out=-20, in=105] (25);
\draw (7) -- (19);
\draw (8) -- (14);
\draw (8) -- (24);
\draw (8) -- (17);
\draw (9) -- (15);
\draw (9) -- (25);
\draw (9) -- (16);
\draw (10) -- (16);
\draw (10) to [out=20, in=-105] (24);
\draw (10) -- (19);
\draw (11) -- (17);
\draw (11) to [out=-45,in=-90,looseness=1] (25);
\draw (11) -- (18);
\draw (12) -- (20);
\draw (12) -- (21);
\draw (12) -- (26);
\draw (13) -- (22);
\draw (13) -- (23);
\draw (13) -- (27);

\end{tikzpicture} 
\caption{Worst-case scenario for \textsc{3-coloring} (single tree). \textcolor{red}{Red}: root vertices ($R$). \textcolor{orange}{Orange}: internal vertices ($I$). \textcolor{Goldenrod}{Yellow}: leaves ($L$). \textcolor{lime}{Lime}: high-magnitude vertices. \textcolor{green}{Green}: other neighbors to the bushy forest ($N$). \textcolor{teal}{Teal}: vertices outside the bushy forest with only high-magnitude vertices as neighbors ($U'$). \textcolor{cyan}{Cyan}: other vertices ($U$).}
\label{fig:worstcase}
\end{figure}

We used the \textsc{(3,2)-Constraint Satisfaction Problem} algorithm as a black box algorithm, so any improvement to the \textsc{(3,2)-Constraint Satisfaction Problem} will automatically improve our algorithm for \textsc{3-coloring}.
Our algorithm will color many vertices through the bushy and chromatic forests, so the input of the \textsc{(3,2)-Constraint Satisfaction Problem} will be relatively predictable.
One might be able to improve the algorithm for \textsc{3-coloring}, by investigating whether the input to the \textsc{(3,2)-Constraint Satisfaction Problem} contains easily structures that are easy to reduce.

It is unclear whether the concepts presented in this paper could be applied to improve the runtime of $k$-\textsc{coloring} for $k \geq 4$.
Especially for \textsc{4-coloring}, the best known algorithm for \textsc{(4,2)-Constraint Satisfaction Problem} is the exact same as the \textsc{(3,2)-Constraint Satisfaction Problem}.
However, the \textsc{(4,2)-Constraint Satisfaction Problem} adds a complex new case:
both variables with three and four possible colors require exponential time, but the variables with four colors are more difficult to color than those with three.
If a vertex has a colored neighbor, it will still have three possible colors.
So, we cannot reduce a vertex in polynomial time, unless it has two neighbors which received different colors.
As such, it is significantly more complex to use a partially colored graph and the \textsc{(4,2)-Constraint Satisfaction Problem} to solve \textsc{4-coloring}.

\section{Acknowledgements}
\label{sec:8}
We thank Carla Groenland and Jesper Nederlof for their detailed feedback on earlier drafts of this paper and helpful discussions about graph coloring.
We thank Till Miltzow and Ivan Bliznets for giving recommendations about the writing style of this paper.

\printbibliography

@article{Beigel2005,
  doi = {10.1016/j.jalgor.2004.06.008},
  url = {https://doi.org/10.1016/j.jalgor.2004.06.008},
  year = {2005},
  month = feb,
  publisher = {Elsevier {BV}},
  volume = {54},
  number = {2},
  pages = {168--204},
  author = {Richard Beigel and David Eppstein},
  title = {3-coloring in time},
  journal = {Journal of Algorithms}
}

@incollection{Schiermeyer1994,
  doi = {10.1007/3-540-57899-4_51},
  url = {https://doi.org/10.1007/3-540-57899-4_51},
  year = {1994},
  publisher = {Springer Berlin Heidelberg},
  pages = {177--188},
  author = {Ingo Schiermeyer},
  title = {Deciding 3-colourability in less than O(1.415n) steps},
  booktitle = {Graph-Theoretic Concepts in Computer Science}
}

@article{Moon1965,
  doi = {10.1007/bf02760024},
  url = {https://doi.org/10.1007/bf02760024},
  year = {1965},
  month = mar,
  publisher = {Springer Science and Business Media {LLC}},
  volume = {3},
  number = {1},
  pages = {23--28},
  author = {J. W. Moon and L. Moser},
  title = {On cliques in graphs},
  journal = {Israel Journal of Mathematics}
}

@article{Lawler1976,
  doi = {10.1016/0020-0190(76)90065-x},
  url = {https://doi.org/10.1016/0020-0190(76)90065-x},
  year = {1976},
  month = aug,
  publisher = {Elsevier {BV}},
  volume = {5},
  number = {3},
  pages = {66--67},
  author = {E.L. Lawler},
  title = {A note on the complexity of the chromatic number problem},
  journal = {Information Processing Letters}
}

@misc{https://doi.org/10.48550/arxiv.2007.10790,
  doi = {10.48550/ARXIV.2007.10790},
  url = {https://arxiv.org/abs/2007.10790},
  author = {Zamir,  Or},
  keywords = {Data Structures and Algorithms (cs.DS),  FOS: Computer and information sciences,  FOS: Computer and information sciences},
  title = {Breaking the $2^n$ barrier for 5-coloring and 6-coloring},
  publisher = {arXiv},
  year = {2020},
  copyright = {arXiv.org perpetual,  non-exclusive license}
}

@incollection{Karp1972,
  doi = {10.1007/978-1-4684-2001-2_9},
  url = {https://doi.org/10.1007/978-1-4684-2001-2_9},
  year = {1972},
  publisher = {Springer {US}},
  pages = {85--103},
  author = {Richard M. Karp},
  title = {Reducibility among Combinatorial Problems},
  booktitle = {Complexity of Computer Computations}
}

@article{Bjrklund2009,
  doi = {10.1137/070683933},
  url = {https://doi.org/10.1137/070683933},
  year = {2009},
  month = jan,
  publisher = {Society for Industrial {\&} Applied Mathematics ({SIAM})},
  volume = {39},
  number = {2},
  pages = {546--563},
  author = {Andreas Bj\"{o}rklund and Thore Husfeldt and Mikko Koivisto},
  title = {Set Partitioning via Inclusion-Exclusion},
  journal = {{SIAM} Journal on Computing}
}

@article{Johnson1988,
  doi = {10.1016/0020-0190(88)90065-8},
  url = {https://doi.org/10.1016/0020-0190(88)90065-8},
  year = {1988},
  month = mar,
  publisher = {Elsevier {BV}},
  volume = {27},
  number = {3},
  pages = {119--123},
  author = {David S. Johnson and Mihalis Yannakakis and Christos H. Papadimitriou},
  title = {On generating all maximal independent sets},
  journal = {Information Processing Letters}
}

@article{Lovasz1973,
author="Lovász, L.",
title="Coverings and colorings of hypergraphs",
journal="Proc. 4th Southeastern Conference of Combinatorics, Graph Theory, and Computing",
publisher="Utilitas Mathematica Publishing",
year="1973",
pages="3-12",
URL="https://cir.nii.ac.jp/crid/1572261549354589440"
}

@article{Fomin2007,
  doi = {10.1007/s00453-007-9133-3},
  url = {https://doi.org/10.1007/s00453-007-9133-3},
  year = {2007},
  month = dec,
  publisher = {Springer Science and Business Media {LLC}},
  volume = {54},
  number = {2},
  pages = {181--207},
  author = {Fedor V. Fomin and Serge Gaspers and Saket Saurabh and Alexey A. Stepanov},
  title = {On Two Techniques of Combining Branching and Treewidth},
  journal = {Algorithmica}
}
\end{document}